\newcommand{\short}[1]{#1}     % LEAVE UNCOMMENTED
\newcommand{\extended}[1]{}    % LEAVE UNCOMMENTED
\renewcommand{\short}[1]{}      % UNCOMMENT FOR THE EXTEDED VERSION. COMMENT OUT FOR THE SHORT
\renewcommand{\extended}[1]{#1}  % DO WHAT YOU PLEASE
\newcommand{\Purge}{\mathit{Purge}}
\newcommand{\High}{\ensuremath{H}\xspace}
\newcommand{\Environ}{\High}
\newcommand{\low}{{u_l}}
\newcommand{\Low}{\ensuremath{L}\xspace}
\newcommand{\NI}{\textit{NI}}
\newcommand{\NItxt}{noninterference }
\newcommand{\safest}{\mathbb{S}}
\newcommand{\reachst}{\mathbb{T}}
\definecolor{darkblue}{rgb}{0.0, 0.0, 0.55}
\newcommand{\WJ}[1]{{\color{darkgreen}[WJ: #1]}}
\renewcommand{\WJ}[1]{}
\renewcommand{\Agt}{\mathfrak{U}}
\renewcommand{\Act}{\mathfrak{A}}
\renewcommand{\norecl}{\mathfrak{pos}}
\renewcommand{\recl}{\mathfrak{Rec}}
\newcommand{\modl}{\mathcal{M}}
\newcommand{\uwNIL}{\;{\sim}_{{NI}_\Low}}
\newcommand{\ustar}{\mathcal{U}_M^*}
\newcommand{\rstar}{R_M^*}
\newcommand{\sherr}{s_\mathit{HErr}}
\newcommand{\slerr}{s_\mathit{LErr}}
\newcommand{\undef}{\mathit{undef}}
\newcommand{\Mname}{\mathit{setMName}}
\newcommand{\Gname}{\mathit{setGName}}
\newcommand{\chkWeb}{\mathit{chkWeb}}
\renewcommand{\act}{\mathit{act}}
\newcommand{\obsL}[1]{{\color{darkgrey}\mathit{#1}}}
\newcommand{\obsH}[1]{{\color{darkgrey}\mathit{#1}}}
\title{Information Security as Strategic (In)effectivity}
\author{}
\institute{}
\author{ Wojciech Jamroga\inst{1} \and Masoud Tabatabaei\inst{2} \WJ{Peter Y. Ryan?}}
\institute{
  Institute of Computer Science, Polish Academy of Sciences
\and
  Interdisciplinary Centre for Security and Trust, University of Luxembourg
\email{w.jamroga@ipipan.waw.pl,\quad masoud.tabatabaei@uni.lu}
}
\begin{document}
\pagestyle{plain}
\maketitle

\begin{abstract}
Security of information flow is commonly understood as preventing \emph{any} information leakage, regardless of how grave or harmless consequences the leakage can have.
\extended{
  % **********
  Even in models where each piece of information is classified as either sensitive or insensitive, the classification is ``hardwired'' and given as a parameter of the analysis, rather than derived from more fundamental features of the system.
  % **********
}%end extended
In this work, we suggest that information security is not a goal in itself, but rather a means of preventing potential attackers from compromising the correct behavior of the system. To formalize this, we first show how two information flows can be compared by looking at the adversary's ability to harm the system. Then, we propose that the information flow in a system is \emph{effectively information-secure} if it does not allow for more harm than its idealized variant based on the classical notion of noninterference.
\end{abstract}

%\todo{cite all the relevant contexts (effectivity functions)}

\section{Introduction}\label{sec:intro}

\extended{ %begin extended
  % **********
  Information plays multiple roles in interaction between agents (be it humans or artificial entities, e.g., software agents). First, it can be the commodity that the agents compete for; in that case, it often defines the outcome of the ``interaction game''. Key exchange protocols are a good example here, as the involved honest parties strive to learn the key of the other agent while at the same time preventing any information leak to the intruder.
  Secondly, information can define the semantic content of an action: typically, most actions specified in a security protocol consist in transmitting or processing some information. Thirdly, information can be a resource that enables actions and influences the outcome of the game. This is because agents need information to construct and execute plans that can be used to achieve their goals.

  Most approaches to information flow security adopt the first perspective. That is, information defines the ultimate goal of the interaction.
  % **********
}%end extended
\short{
  In most approaches to information flow security, information defines the ultimate goal of the interaction between agents.
}% end-short
Classical information security properties specify \emph{what} information must not leak, and \emph{how} it could possibly leak (i.e., what channels of information leakage are considered), but they do not give account of \emph{why} the information should not leak to the intruder. For example, the property of \emph{noninterference}~\cite{goguen1982security} assumes that the ``low clearance'' users cannot learn anything about the activities of the ``high clearance'' users. In order to violate this, the ``low'' users can try to analyse their observations and/or execute a sequence of explorative actions of their own. \emph{Nondeducibility on strategies}~\cite{Wittbold90information} makes the same assumption about \emph{what} should not leak, but takes also into account covert channels that some ``high'' users can use to send signals to the ``low'' agents according to a previously agreed code.
\emph{Anonymity} in voting~\cite{Chaum81untreaceable,Fujioka92voting} captures that an observer cannot learn what candidate a particular has voted for by looking at the voter's behavior, scanning the web bulletin board, coercing the voter to hand in the vote receipt, etc.

As a consequence, the classical properties of information security can only distinguish between relevant and irrelevant information leaks if the distinction is given explicitly as a parameter, e.g., by classifying available actions into sensitive and insensitive~\cite{goguen1982security}. However, it is usually hard (if not impossible) to obtain such a distinction based on the internal characteristics of the actions. We illustrate the point below by means of a real-life example.

\extended{\subsection{Motivating Example: Phone Banking}\label{sec:motivating}}
\short{\begin{example}\label{ex:motivating}}
In some phone banking services, the maiden name of the user's mother is used as a part of authentication\extended{, e.g., to change the settings of the account. That is, the user is typically asked to spell her name, birth-date, current address, and her mother's maiden name in order to change the credit limit in the account, block/unblock ATM use in specified geographical areas, and so on.
Note that information about one's birth-date and address is fairly easy to obtain in public directories and/or repositories kept and marketed by various web services that require the data for registration. So, the mother's maiden name plays the role of a "strong test of identity" in this scenario}.\footnote{
  This is a real-life example from the authors' personal experience\extended{ with BNP Paribas in one of EU countries}.
  For similar security questions, used by various phone or web services, cf.~e.g.~\cite{Levin08honeymoon}.
  %We could also use the name of one's first pet in the example, and it wouldn't change the argument.
  }
Consider now a user posting an essay about some ancestor of hers on her blog, mentioning also the name of the ancestor.
If the essay is about the user's mother, it reveals potentially dangerous information.
\extended{
  % **********
  This is, among other things, because an intruder can use the information to: (1) access the phone banking service, (2) authenticate impersonating the user, (3) change (in the user's profile) the telephone number used for web banking password recovery and sms authentication of web banking transactions, (4) change the web banking password of the user, and finally (5) log in and transfer money from the user's account.

  % **********
}%end extended
On the other hand, if the post is about some other member of the user's family (father, grandmother, paternal grandfather, etc.) revealing the name of the person is probably harmless.
Note that it is impossible to distinguish between the two pieces of information (say, the mother's maiden name vs. the grandmother's maiden name) based on their internal features.
\extended{
  % **********
  Both have the same syntactic structure of a single word (i.e., a string of characters with no blank spaces) and the same semantic content (a family name of a person; more precisely, the family name of the person at birth).
  % **********
}%end extended
The only difference lies in the context: the first kind of information is used in some important social procedures, while the second one is not.
\short{\qed\end{example}}

\extended{\subsection{Information as Strategic Resource}}

In this paper, we claim that a broader perspective is needed to appropriately model and analyse such scenarios. Agents compete for information not for its own sake, but for reasons that go beyond purely epistemic advantages.
\extended{
  % **********
  An intruder may want to know the password of a PayPal user in order to impersonate the user and steal some \emph{real} money by making a payment to his own benefit (possibly via an account of a suitable ``mule''). An industry player may need encryption keys used in internal communication between employees of its main competitor in order to find out about the competitor's current business strategy.
  A political activist needs the ability to learn the value of someone's vote in order to effectively coerce that person into voting for the candidate that the activist is rallying for.
  Thus, in most security scenarios, information is a resource rather than a commodity.
  % **********
}%end extended
More precisely, information is a commodity that the players compete for in an ``information security game'' but the game is played in the context of a ``real'' game where information is only a resource, enabling (some) players to achieve their non-epistemic goals. As players obtain new information, their uncertainty is reduced, and they increase their ability to choose a good strategy in the real game.

What would a \emph{significant information leak} be in this view? To answer the question, we draw inspiration from the concept of the \emph{value of information}\extended{ from decision theory~\cite{Howard66infovalue}}: a piece of information is worth as much as it increases the expected payoff of the player.
Similarly, an information leak is significant if it increases the ability of the attacker to construct a damaging attack strategy in the real game.

\extended{\subsection{Main Idea and Contribution of the Paper}}
\short{\para{Contribution of the Paper.}}
\extended{
  % **********
  The main idea behind this paper can be summarized as follows.
  We consider three research questions:
  \begin{itemize}
  \item How can we evaluate the ability of an adversary to harm the goal of the system?
  \item How can we compare two systems with regard to the ability of attackers to harm the goal of the system in them?
  \item How can we know whether the ability (or inability) of the attacker to harm the goal of the system is because of some leakage of information to the attacker or not?
  \end{itemize}

  The paper is structured to discuss these questions in order.
  % **********
}%end extended
First, we use the concept of \emph{surely winning strategies} from game theory to analyze the adversary's strategic ability to disrupt the correct behavior of the system.
\extended{
  % **********
  This can be a functionality property, a security property, or a combination of the two kinds. Also, it can arise from a goal of the ``high clearance'' agents or from an objective assigned to the system by its designers and/or owners.
  Preventing the attacker from having a winning attack strategy is what the designer of the system may want to achieve.
  % **********
}%end extended
We will see the \emph{effective security} of the system as the attacker's \emph{in}ability to come up with such a strategy.

Secondly, we use the notion of \emph{effective security} for comparing two systems by looking at the strategic ability of an adversary to harm the goal of the system.

Thirdly, a successful attack strategy can exist due to flawed design of either the control flow or the information flow in the system. Here, we are interested in the latter. That is, we want to distinguish between vulnerabilities coming from the control vs.~the information flow, and single out systems where redesigning the flow of information alone can make the system more secure. To this end, we define the noninterferent idealized variant of the system, which has the same control flow as the original system, but with the information reduced so that the system satisfies noninterference. Then, we define the system to be \emph{effectively information-secure} if it is as good as its noninterfering idealized variant.
As the main technical result, we show that the concept is well defined, i.e., the maximal noninterferent variant exists for every state-transition model.
\extended{
  % **********

  We begin by presenting the preliminary concepts (models of interaction, noninterference, strategies and their outcomes) in Section~\ref{sec:preliminaries}. In Section~\ref{sec:effsec}, we define the generic concept of effective security. In Section~\ref{sec:effinfosec}, we look specifically at the security of information flow, and show how it can be defined based on the relation between the attacker's observational capabilities and his ability to compromise the goals of the system. In Section~\ref{sec:non-total-models} we extend our results to models that are not total on input. Finally, we summarize the work in Section~\ref{sec:conclusions}.
  % **********
}%end extended

\section{Related Work}\label{sec:relatedwork}

Various formalizations of information flow security have been proposed and studied.
The classical concept here is \emph{noninterference}~\cite{goguen1982security} and its variations: \emph{nondeducibility}~\cite{sutherland1986model}, \emph{noninference}~\cite{o1990calculus}, \emph{restrictiveness}~\cite{mccullough1988noninterference}, \emph{nondeducibility on strategies}~\cite{Wittbold90information}, and \emph{strategic noninterference}~\cite{Jamroga15strat-ni}.
\extended{
  % **********
  Although noninterference was originally introduced for finite transition systems, it was later redefined, generalized, and extended in the framework of process algebras~\cite{allen1991comparison,roscoe1994non,roscoe1995csp,roscoe1999intransitive,ryan2001process}.
  Noninterference and its variants have been studied from different perspectives. Some works dealt with composability of noninterference~\cite{mccullough1988noninterference,zakinthinos1995composability,seehusen2009information}. Another group of papers studied the properties of intransitive noninterference~\cite{roscoe1999intransitive,backes2003intransitive,van2007indeed,engelhardt2012intransitive} which is important in systems with downgraders.
  % **********
}%end extended
Probabilistic noninterference and quantitative noninterference have been investigated, e.g., in~\cite{gray1990probabilistic,Wittbold90information,mciver2003probabilistic,pierro2004approximate,li2005downgrading,smith2009foundations}.
All the above concepts assume that the information flow in the system is secure only when no information ever flows from High to Low players.
In this paper, we want to discard irrelevant information leaks, and only look at the significant ones (in the sense that the leaking information can be used to construct an attack on a higher-order correctness property).

The problem of how to weaken noninterference to successfully capture security guarantees for real systems has been also extensively studied.
Most notably, postulates and policies for \emph{declassification} (called also \emph{information release}) were studied, cf.~\cite{SabelfeldS05declassification} for an introduction. This submission can be viewed as an attempt to determine \emph{what information is acceptable to declassify}. In this sense, our results can useful in proposing new declassification policies and evaluating existing ones. We note, however, that the existing work on declassification are mainly concerned by the question \emph{what} information can be released, \emph{when}, \emph{where}, and by \emph{whom}. In contrast, we propose an argument for \emph{why} it can be released.
Moreover, declassification is typically about intentional release of information, whereas we do not distinguish between intentional and accidental information flow.
Finally, the research on declassification assumes that security is defined by some given ``secrets'' to be protected. In our approach, no information is intrinsically secret, but the information flow is harmful if it enables the attacker to gain more strategic ability against the goals of the system.

Parameterized noninterference~\cite{Giacobazzi04abstract-noninterf} can be seen as a theoretical counterpart of declassification, where security of information flow is parameterized by the analytic capabilities of the attacker. Again, that research does not answer why some information must be kept secret while some other needs not, and in particular it does not take strategic power of the attacker into account.

Economic and strategic analysis of security properties is a growing field in general,
\extended{cf.~e.g.~\cite{Anderson07incentives,Moore11economics,Dodis07cryptoGT,Buldas07evoting,Yin10stackelberg,Korzhyk11stackelberg}.}
\short{cf.~\cite{Moore11economics} for an introduction.}
A number of papers have applied game-theoretic concepts to define the security of information flow~\cite{Malacaria99nondet-games,Hankin02flow,Harris13capabilities,Dimovski14noninterf-game,Fielder14GT-infosec,Jamroga15strat-ni}.
However, most of those papers~\cite{Malacaria99nondet-games,Hankin02flow,Harris13capabilities,Dimovski14noninterf-game} use games only in a narrow mathematical sense to provide a proof system (called the \emph{game semantics}) for deciding security properties.
We are aware of only a handful of papers that investigate the impact of participants' incentives and available strategies on the security of information flow.
In~\cite{AcquistiG04privacy,Grossklags08secureOrInsure}, economic interpretations of privacy-preserving behavior are proposed.
\cite{Fielder14GT-infosec} uses game-theoretic solution concepts (in particular, Nash equilibrium) to prescribe the optimal defense strategy against attacks on information security. In contrast, our approach is analytic rather than prescriptive, as we do not propose how to manage information security. Moreover, in our view, privacy is not the goal but rather the means to achieve some higher-level objectives.
Finally, \cite{Jamroga15strat-ni}~proposes a weaker variant of noninterference by allowing the High players to select an appropriate strategy, while here we look at the potential damage inflicted by adverse strategies of the Low users.

Our idea of looking at the unique most precise non-interfering variant of the system is related on the technical level to~\cite{Giacobazzi04abstract-noninterf}. There, attackers displaying different analytical capabilities are defined by abstract interpretation, which leads to a lattice of noninterference variants with various strength. Attackers with weakened observational powers were also studied in~\cite{Zdancewic03observational-det}.

\section{Preliminaries}\label{sec:preliminaries}

\extended{
  % **********
  We begin by presenting the main ingredients that we are going to use in our proposal. First, we introduce simple models of interaction that slightly extend the classical approach of Goguen and Meseguer. Then we recall Goguen and Meseguer's definition of noninterference that captures the property of secure information flow from the ``insider'' agents to the ``outsiders''. Finally, we present some basic concepts from game theory (strategies, winning strategies) and theory of temporal specification (temporal goals).
  % **********
}%end extended

\subsection{Simple Models of Interaction}\label{sec:models}

Since we build our proposal around the standard notion of noninterference by Goguen and Meseguer~\cite{goguen1982security}, we will use similar models to represent interaction between actions of different agents.
The \emph{system} is modeled by a multi-agent asynchronous transition network
$M = \tuple{\States,s_0,\Agt,\Act,Obs,obs,do}$
where:
$\States$ is the set of \emph{states},
$s_0$ is the initial state,
$\Agt$ is the set of \emph{agents} (or \emph{users}),
$\Act$ is the set of \emph{actions}\extended{ (or \emph{commands})},
$Obs$ is the set of possible \emph{observations} (or \emph{outputs});
$obs : \States \times \Agt \then Obs$ is the observation function.
$do : \States \times \Agt \times \Act \then \States$ is the transition function that specifies the (deterministic) outcome $do(s,u,a)$ of action $a$ executed by user $u$ in state $s$.
%By $do(s,u,a)=\undef$ we denote that action $a$ is unavailable to user $u$ in state $s$;
%additionally, we define $\act(s,u)=\set{a\in\Act \mid do(s,u,a)\neq\undef}$ as the set of actions available to $u$ in $s$.
%The following property is required to hold: For any $u\in\Agt, s_1, s_2 \in\States$, if $obs(s_1,u)=obs(s_2,u)$ then $\act(s_1,u)=\act(s_2,u)$.
%That is, we assume that players are aware of their available actions, and hence can distinguish states with different repertoires of choices.\footnote{
%The models of Goguen and Meseguer~\cite{goguen1982security} additionally assume that the transition function is \emph{total on input}, i.e., every action is available to each player at every state. Thus, they can be seen as the subclass of our models where the transition function $d$ is total. }
%
We will sometimes write $[s]_u$ instead of $obs(s,u)$. Also, we will call a pair \textit{(user, action)} a \emph{personalized action}.
We construct the multi-step transition function $exec : \States \times (\Agt\times\Act)^* \then \States$ so that,
for a finite string $\alpha\in (\Agt\times\Act)^*$ of personalized actions, $exec(s,\alpha)$ denotes the state resulting from execution of $\alpha$ from $s$ on. %If at some point in executing $\alpha$ from $s$ the subsequent personalized action is not available then $exec(s,\alpha) = \undef$.
We may sometimes write $s\xrightarrow{\alpha} t$ instead of $exec(s,\alpha) = t$, and $exec(\alpha)$ instead of $exec(s_0,\alpha)$.
\extended{
  % **********
  The way models are constructed is illustrated by the following example.

  \begin{figure*}[!t]\centering
  \hspace*{-40pt}
  \begin{tikzpicture}[thick,scale=0.7, every node/.style={transform shape}]
  \input{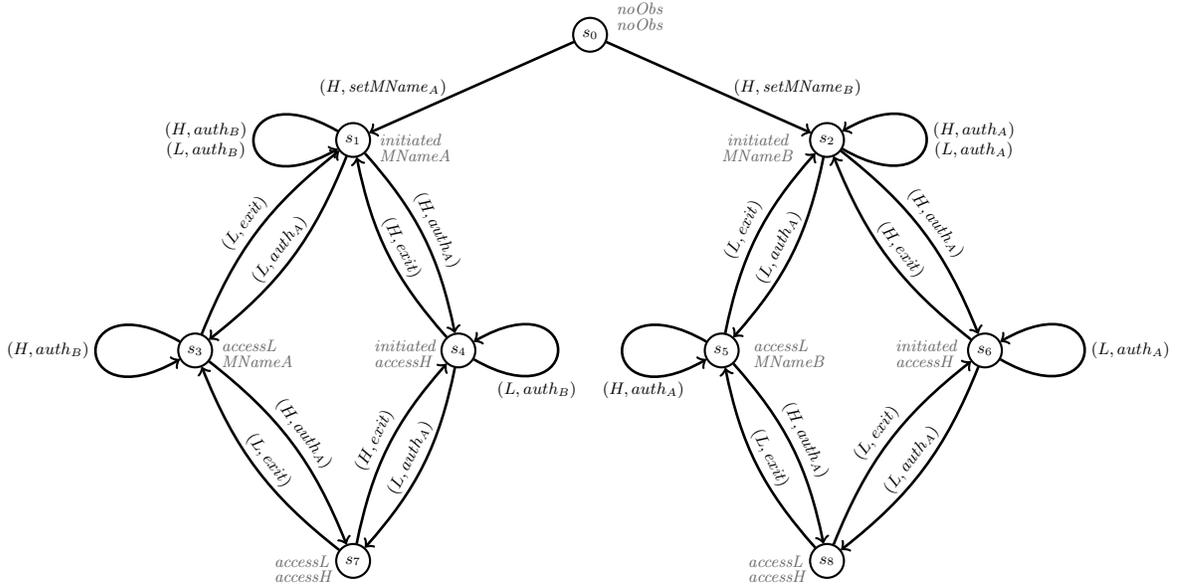}
  \end{tikzpicture}
  \caption{A simple model for the phone banking example}
  \label{fig:banking1}
  \end{figure*}

  \begin{example}\label{ex:banking1}
  Consider a simplified version of the phone banking scenario from Section~\ref{sec:motivating}, where a client can access his/her account by correctly giving the maiden name of his/her mother. Figure~\ref{fig:banking1} presents the simplest possible transition network for the scenario. Labels on transitions show the personalized actions resulting in the transition, and the observations of users in each state are shown beside the state. There are two users: \High who has an account in the bank, and \Low who may try to impersonate \High. At the initial state, \High enters her mother's maiden name when setting up her profile at the bank. To keep the graph simple, we include only two possibilities: action $\Mname_A$ fixes the name as ``A'', whereas action $\Mname_B$ sets the entry to ``B''. Clearly, \High can observe which value she entered (different observations in states $s_1,s_2$). Moreover, \Low cannot observe that, as \Low's observations in $s_1,s_2$ are the same.

  Any user can access \High's bank account by mentioning the name correctly. So, depending on the value that has been entered, doing either $auth_A$ or $auth_B$ will grant the user with access to the bank account. Moreover, the user who authenticated successfully can observe it, and the other user cannot even see that the authentication took place. The user who has successfully logged in to the account, can log out by executing the action $exit$.
  To make the graph simpler, we also assume that giving the wrong name has no effect on the state of the system.
  \end{example}
  % **********
}%end extended

Three remarks are in order.
First, Goguen and Meseguer's models define agents' observations based on states only, whereas it is often convenient to also model the information flow due to observing each others' actions.
Secondly, the models are fully asynchronous in the sense that if each user ``submits'' a sequence of actions to be executed then every interleaving of the submitted sequences can occur as the resulting behavior of the system. No synchronization is possible.
Thirdly, the models are ``total on input'' (each action label is available to every user at every state), and hence no synchronization mechanism can be encoded via availability of actions.
Especially the last two features imply that models of Goguen and Meseguer allow for representation of a very limited class of systems.

\extended{
  % **********
  More expressive classes of models include various kinds of transition systems~\cite{Winskel95concurrency}, concurrent programs~\cite{Kupferman00automata-mcheck}, interpreted systems~\cite{Fagin95knowledge}, reactive modules~\cite{Alur99reactive}, multi-agent transition networks (a.k.a. concurrent game structures)~\cite{Alur02ATL}, and many more.

  \WJ{point out that the structures are a special kind of temporal-epistemic Kripke structures}
  \WJ{note how transition networks obtained by operational semantics of CCS/CSP process expressions fit into this framework}
  % **********
}
We start by using the purely asynchronous models of Goguen and Meseguer. Then, in Section~\ref{sec:non-total-models}, we extend our results to a broader class of models by allowing partial transition functions.
%This is why we have extended the models to allow for partial transition functions.
%However, there is a price to pay: we cannot use the characterization results for information security from~\cite{goguen1984unwinding,rushby1992noninterference,Meyden10comparison}. Instead, we will have to ``upgrade'' the results to the broader class of models.

\subsection{Noninterference}\label{sec:noninterference}

We now recall the standard notion of noninterference from~\cite{goguen1982security}.
%, and extend it to models that are not total on input.
Let $U\subseteq\Agt$ and $\alpha\in(\Agt\times\Act)^*$.
By $\Purge_U(\alpha)$ we mean the subsequence of $\alpha$ obtained by eliminating all the pairs $(u,a)$ with $u\in U$.
%Also, by $\Purge_A(\alpha)$ we mean the subsequence of $\alpha$ obtained by eliminating all the pairs $(u,a)$ with $a\in A$.
%Finally, $\Purge_{U,A}(\alpha)$ denotes the subsequence of $\alpha$ obtained by eliminating all the pairs $(u,a)$ with $u\in U$ and $a\in A$.

\begin{definition}[Noninterference~\cite{goguen1982security}]\label{def:noninterference}
Let $M$ be a transition network with sets of ``high clearance'' agents $\High$ and ``low clearance'' agents $\Low$, such that $\High\cap\Low=\emptyset, \High\cup\Low=\Agt$.
We say that \emph{$\High$ is non-interfering with $\Low$} iff for all $\alpha\in(\Agt\times\Act)^*$ and all $\low\in \Low$, $[exec(\alpha)]_{\low} = [exec(\Purge_\High(\alpha))]_{\low}$. We denote the property by $\NI_M(\High,\Low)$.
%\footnote{  The original definition of noninterference from~\cite{goguen1982security} omits the condition ``if $exec(\alpha)\neq \undef$''. Clearly, both definitions agree in the class of models considered in~\cite{goguen1982security}, i.e., models that are total on input. This is because, in those models, all sequences of personalized actions can be executed from any state in the system. }
\end{definition}

Thus, $\NI_M(\High,\Low)$ expresses that \Low can neither observe nor deduce what actions of \High have been executed
\extended{ -- in fact, they have no clue whether \emph{any} \High's action was executed at all.}

\begin{figure}[!t]%\centering
\hspace*{-30pt}
  \begin{tikzpicture}[thick,scale=0.6, every node/.style={transform shape}]
  \input{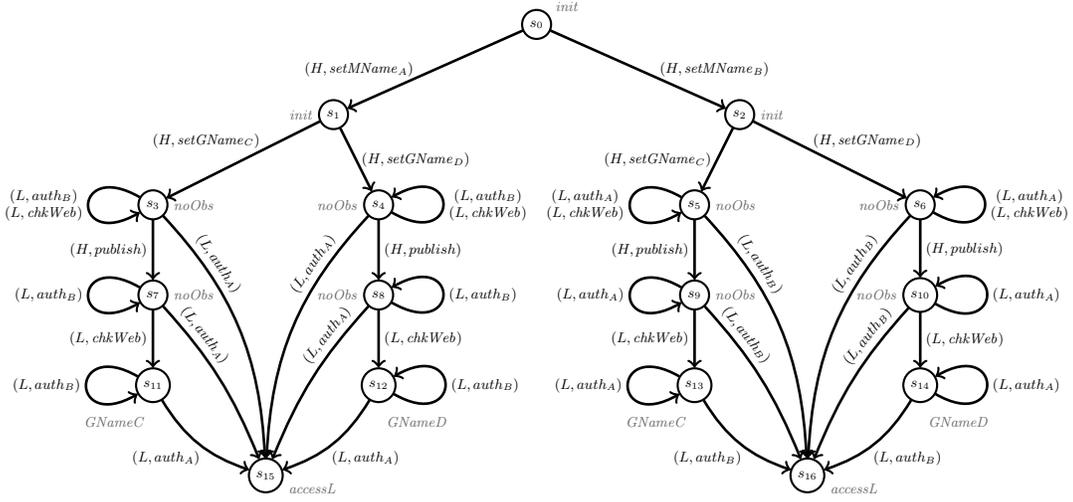}
  \end{tikzpicture}
\caption{Transition network $M_a$ in which the High player publishes her grandmother's maiden name on her blog. Only the observations of \Low are shown}
\label{fig:banking2a}
\end{figure}

\begin{figure}[!t]%\centering
\hspace*{-30pt}
  \begin{tikzpicture}[thick,scale=0.6, every node/.style={transform shape}]
  \input{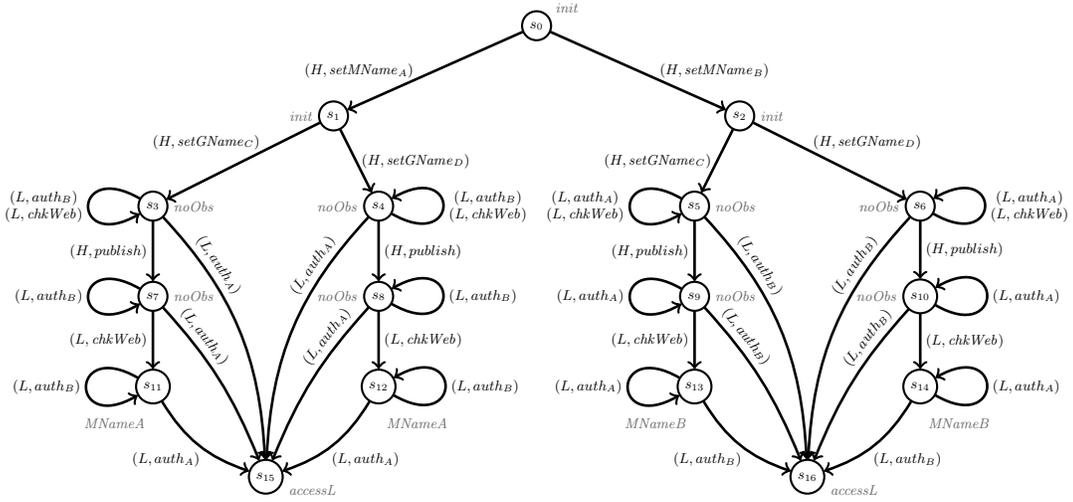}
  \end{tikzpicture}
\caption{Transition network $M_b$ in which \High publishes her mother's maiden name}
\label{fig:banking2b}
\end{figure}

%\caption{Two alternative blog \& phone banking scenarios\extended{. The difference between the two variants is that in (a), after \Low executes $chkWeb$, \Low's observation does not help her to have a surely winning strategy to access the bank account. For example if \Low's observation after executing $chkWeb$ is $\obsL{GNameC}$, the actual state might be $s_{11}$ or $s_{13}$, which need different actions for \Low to access the account. However in variant (b), after executing $chkWeb$, \Low's observation will allow her to have a sure winning strategy to access the account. For example if \Low's observation after execution of $chkWeb$ is $\obsL{MnameA}$, the actual state is either $s_{11}$ or $s_{12}$, and in both \Low can access the account by executing $auth_A$.}%end-extended
%}
%\label{fig:banking2}
%\end{figure}

\begin{example}\label{ex:ni}
Consider a simplified version of the phone banking scenario from \extended{Section~\ref{sec:motivating}}\short{Example~\ref{ex:motivating}}. There are two users: \High who has an account in the bank, and \Low who may try to impersonate \High. \High can access her account by correctly giving the maiden name of her mother. Moreover, \High runs a blog, and can publish some of her personal information on it. We consider two alternative variants: one where \High publishes her grandmother's maiden name on the blog (Figure~\ref{fig:banking2a}), and one where she publishes her mother's maiden name (Figure~\ref{fig:banking2b}). We assume that the possible names are $A$ and $B$ in the former case, and $C$ and $D$ in the latter.\short{Each model begins by initialization of the relevant names.}
 %Figure~\ref{fig:banking2} shows the transition networks for the two variants.
%Labels on transitions show the personalized actions resulting in the transition, and the observations of users in each state are shown beside the state.
The observations of \Low are shown beside each state. The observations for \High are omitted, as they will be irrelevant for our analysis.
%Also only the relevant  transitions are depicted, all the transitions that are not shown are consider to be reflexive.
\extended{
  % **********

  Each model begins by initialization of the relevant names, represented by virtual actions of agent $\Environ$. Then, \High publishes an essay on her blog. In the first variant, the essay mentions the maiden name of \High's grandmother. In the second variant, it mentions her mother's maiden name.
  After H has published the essay, \Low can check the blog (action $\chkWeb$). The resulting observation of \Low depends on what is published.
  Then, authentication proceeds\extended{ like in Example~\ref{ex:banking1}}: in order to log in, a user must give the correct value of \High's mother's maiden name.
  % **********
}%end extended

Note that, for mathematical completeness, we must define the outcome of every user-action pair in every state. We assume that there are two ``error states'' $\sherr,\slerr$ in models $M_a$ and $M_b$ (not shown in the graphs). Any action of \High not depicted in the figure leads to $\sherr$, and any action of \Low not depicted in the figure leads to $\slerr$.
We will later use the error states in the definition of the players' goals, in such a way that \Low will always want to avoid $\slerr$ and \High will want to avoid $\sherr$.
This way we can (however imperfectly) simulate some synchronization in the restricted framework of Goguen and Meseguer.

Neither $M_a$ nor $M_b$ satisfies noninterference from \High to \Low. For instance, in the model of Figure~\ref{fig:banking2a}, if $\alpha = \langle (H,\Mname_A),(H,\Gname_D),(H,publish),(L,\chkWeb)\rangle$, the observation of \Low after sequence $\alpha$ is $\obsL{GNameD}$, but the observation of \Low after $\Purge_\High(\alpha)=\langle (L,\chkWeb) \rangle$ is $\obsL{noObs}$, which is clearly different.
\qed
\end{example}

\extended{
  % **********
  Again, two remarks are in order.
  First, noninterference focuses solely on the information flow in the system. If \Low can detect any activity of \High then noninterference is lost, regardless of the nature of the activity and the possible uses of the information. In real systems, the impact of information flow goes well beyond the information itself. Information is sought and preserved for a reason, not for its own sake. Typically, \Low want to obtain information about \High because they want to use it to achieve their goals more effectively (i.e., conclude a business contract, submit a better bid in an auction, get unauthorised access to a bank account etc.). On the other hand, \High want to protect their private information from \Low because their goals may be in conflict with the goals of \Low. This is especially the case when the Low players are labeled as ``attackers'' or ``intruders''.

  Secondly, detecting \High's actions may require \Low to engage in ``diagnostic'' activity, i.e., executing a sequence of actions whose only purpose is to determine if \High was active or not. This becomes an issue when we see information as a resource used to obtain one's goals, rather than \emph{the} goal of the user's activity. Then, obtaining more information about \High can be in conflict with what \Low must do in order to achieve their real goals. Thus, on one hand \Low need more information to construct a better strategy for their goals, but on the other hand to acquire the information they may have to depart from the successful strategy.
  % **********
}%end extended

%In this paper, we aim at generalizing the concept of noninterference so that it takes into account the strategic aspect of information, seen as a resource relevant for obtaining other, not necessarily information-related goals.

\WJ{note the connection to distributed knowledge within $\Low$}

%Assuming that $\High$ need to hide only occurrences of some ``sensitive'' actions $A\subseteq\Act$, the concept of noninterference is refined as follows.

%\begin{definition}[Noninterference on sensitive actions~\cite{goguen1982security}]
%Given a transition network $M$, sets of agents $\High,\Low$, and a set of actions $A\subseteq\Act$, we say that \emph{$\High$ is non-interfering with $\Low$ on $A$} iff for all $\alpha\in(\Agt\times\Act)^*$ and all $\low\in \Low$ we have $[exec(\alpha)]_{\low} = [exec(\Purge_{\High,A}(\alpha))]_{\low}$. We denote the property by $\NI_M(\High,\Low,A)$.
%\end{definition}
%
%\para{Related Concepts of Secure Information Flow.}
%Since the introduction of noninterference, a number of variants and extensions have been proposed and studied~\cite{goguen1982security,sutherland1986model,o1990calculus,mccullough1988noninterference,Wittbold90information,Jamroga15strat-ni,%
%allen1991comparison,roscoe1994non,roscoe1995csp,roscoe1999intransitive,ryan2001process,mccullough1988noninterference,zakinthinos1995composability,%
%seehusen2009information,backes2003intransitive,van2007indeed,engelhardt2012intransitive,gray1990probabilistic,mciver2003probabilistic,%
%pierro2004approximate,li2005downgrading,smith2009foundations}.
%We discuss the formalizations in some more detail in Section~\ref{sec:relatedwork}.
%
%It should be mentioned that, while we use the original noninterference as the starting point here, the reasoning scheme that we propose can be similarly applied to other concepts of secure information flow.

\subsection{Strategies and Their Outcomes}

\emph{Strategy} is a game-theoretic concept which captures behavioral policies that an agent can consciously follow in order to realize some objective~\cite{\extended{vonNeumann44,}LeytonBrown08GT}.
\extended{
  % **********
  We begin with an abstract formulation, and mention the most representative examples of strategy types in the next paragraph.
  % **********
}%end extended
Let $T(M)$ be the \emph{tree unfolding} of $M$. Also if $U\subseteq\Agt$ is a subset of agents, let $T'$ be a \emph{$U$-trimming} of tree $T$ iff $T'$ is a subtree of $T$ starting from the same root and obtained by removing an arbitrary subset of transitions labeled by actions of agents from $U$. For the moment, we assume that each subset of agents $U\subseteq\Agt$ is assigned a set of available coalitional strategies $\Sigma_U$. The most important feature of a strategy $\sigma_U \in \Sigma_U$ is that \emph{it constrains the possible behaviors of the system}. We represent it formally by the \emph{outcome function} $out_M(\sigma_U)$ that removes the executions of the system that strategy $\sigma_U$ would never choose. Therefore, for every $\sigma_U\in\Sigma_U$, its outcome $out_M(\sigma_U)$ is a $U$-trimming of $T(M)$.
\WJ{Do we define any constraints on $out_M$? In particular, it should not constrain the actions of the other players...}

Let $h$ be a node in tree $T$ corresponding to a particular finite history of interaction. We denote the sequence of personalized actions leading to $h$ by $act^*(h)$. Furthermore, $act^*(T) = \set{ act^*(h) \mid h\in nodes(T) }$ is the set of finite sequences of personalized actions that can occur in $T$.

%Let $h$ be a node in tree $T(M)$. We denote the history of interactions corresponding to $h$ by $hist(h)\in s_0.(\alpha_i.s_i)^*$, where $s_i\in \States$ and $\alpha_i \in (\Agt\times\Act)$. This history is a sequence of states $s_i$, starting from $s_0$, and interleaved by personalized actions $\alpha_i$, which shows the path through the tree $T(M)$ starting from the root and finishing at $h$. We denote the sequence of personalized actions leading to $h$ by $act^*(h)$. Furthermore, $act^*(T) = \set{ act^*(h) \mid h\in nodes(T) }$ is the set of finite sequences of personalized actions that can occur in $T$.

\extended{\para{Types of strategies.}}
Strategies are usually constructed as mappings from possible situations that the player can recognize in the game, to actions of the player (or subsets of actions if we allow for nondeterministic strategies).
\extended{%begin extended
  % **********
  Two types of such strategies are commonly used in the literature on game-like interaction: positional strategies and perfect recall strategies. Positional strategies represent conditional plans where the decision is solely based on what the agents see in the current state of the system, while perfect recall strategies capture conditional plans where the agents can base their decisions on the whole history of the game until that moment. 

  \emph{Positional strategies} represent conditional plans where the decision is solely based on what the agents see in the current state of the system. Formally, for $u\in\Agt$, the set of individual positional strategies of $u$ is $\Sigma_u^\norecl = \{ \sigma_u : \States\then\powerset{\Act}\setminus\set{\emptyset}\ \mid\ \forall q,q'\in \States \	\cdot\	[q]_u=[q']_u \Rightarrow \sigma_u(q)=\sigma_u(q') \}$, where $\powerset{X}$ denotes the powerset of $X$. Notice the ``uniformity'' constraint which enforces that the agent must specify the same action(s) in states with the same observations.
  Now, coalitional positional strategies for a group of agents $U\subseteq\Agt$ are simply tuples of individual strategies, i.e., $\Sigma_U^\norecl = \times_{u\in U}(\Sigma_u^\norecl$).
  The outcome of $\sigma_U\in\Sigma_U^\norecl$ in model $M$ is the tree obtained from $T(M)$ by removing all the branches that begin from a node containing state $q$ with a personalized action $(u,a)\in U\times\Act$ such that $a\notin\sigma_U(q)$.

  In this work we focus on adversaries playing perfect recall strategies.
  
  \emph{Perfect recall strategies.}
  % **********
}%end extended
Formally, the set of \emph{perfect recall strategies} of agent $u$ is $\Sigma_u^\recl = \{ \sigma_u : nodes(T(M))\then\powerset{\Act}\setminus\set{\emptyset}\ \mid\ obs_u(h) = obs_u(h') \Rightarrow \sigma_u(h)=\sigma_u(h') \}$, where $obs_u(h)$ denotes the accumulate observations collected by agent $u$ along history $h$.
How to define $obs_u$ for sequences of states?
For asynchronous systems, this is typically defined as $obs_u(q) = [q]_u$, $obs_u(h\circ q) = obs_u(h)$ if $last(h)=q$, and $obs_u(h\circ q) = obs_u(h)\circ[q]_u$ otherwise (where $\circ$ denotes the concatenation operator).
That is, what $u$ has learned along $h$ is equivalent to the sequence of observations she has seen, modulo removal of ``stuttering'' observations.
%Note again the ``uniformity'' constraint which enforces that the agent must specify the same action(s) for indistinguishable situations (i.e., sequences of states from $M$).
%That is, the ``uniformity'' constraint is now based on the assumption that the agent knows the history of its own observations, plus the actions that it has executed.
Now, coalitional strategies of perfect recall for a group of agents $U\subseteq\Agt$ are combinations of individual strategies, i.e., $\Sigma_U^\recl = \times_{u\in U}(\Sigma_u^\recl$).
The outcome of $\sigma_U\in\Sigma_U^\recl$ in model $M$ is the tree obtained from $T(M)$ by removing all the branches that begin from a node $h$ with a personalized action $(u,a)\in U\times\Act$ such that $a\notin\sigma_U(h)$.

%\todo{EXAMPLE + FIGURE (tree unfolding of web banking model with highlighted paths enabled by a given strategy}

\subsection{Temporal Goals and Winning Strategies}

A goal is a property that some agents may attempt to enforce by selecting their behavior accordingly.
\extended{
  % **********
  In game-theoretic models, goals are typically phrased as properties of the final state in the game.
  In our case, there is no final state -- the interaction can go on forever.
  Because of that, we understand goals as properties of the full temporal trace that executes the sequence of actions selected by users.
  % **********
}%end extended
We base our approach on the concepts of \emph{paths} and \emph{path properties}, used in temporal specification and verification of systems~\cite{\extended{Buchi62omega-regular,}McNaughton66omega-regular}.
%
%Let $traces(M)$ be the set of finite or infinite sequences of states that can be obtained by subsequent transitions in $M$.
%Moreover, $paths(M)$ will denote the set of maximal traces, i.e., those sequences that are either infinite or end in a state with no outgoing transitions. Additionally, we will use $paths_M(\sigma)$ as a shorthand for $paths(out_M(\sigma))$.
%Note that, in the models of Goguen and Meseguer, strategies of any group except for the grand coalition $\Agt$ yield only infinite paths. Since we will only look at the goals of \emph{subsets} of players, it suffices to restrict our attention to infinite paths.
Let $paths(M)$ denote the set of infinite sequences of states that can be obtained by subsequent transitions in $M$.
Additionally, we will use $paths_M(\sigma)$ as a shorthand for $paths(out_M(\sigma))$.
%
%\todo{check if the definition of the goal can remain the same with the new model?
%WJ: I DON'T SEE ANY PROBLEMS.}
\begin{definition}[Temporal goal~\cite{McNaughton66omega-regular}]\label{def:goal}
A \emph{goal} in $M$ is any $\Gamma\subseteq paths(M)$.
Note that $paths(M) = paths(T(M))$, so a goal can be equivalently seen as a subset of paths in the tree unfolding of $M$.
%\WJ{We need finite traces as well, since we allow for strategies that specify empty set for some situations -- thus it might be the case that some path is ``blocked'' by the strategy though it was infinite in the model.}
\end{definition}

%We will typically assume goals to be sets of paths definable in Linear Temporal Logic~\cite{Schnoebelen03complexity}.
Most common examples of such goals are safety and reachability goals.
%For example, a goal of user $u_1$ can be that message $m$ is, at some future moment, communicated to user $u_2$. Or, the users $u_1$ and $u_2$ may have a joint goal of keeping the communication channel $c$ operative all the time. The former is an example of a \emph{reachability goal}, the latter a \emph{safety goal}.

\begin{definition}[Safety and reachability goals~\cite{McNaughton66omega-regular}]\label{def:safety-reachability}
Given a set of safe states $\safest\subseteq \States$, the \emph{safety goal $\Gamma_\safest$} is defined as $\Gamma_\safest=\set{\lambda\in paths(M) \mid \forall i . \lambda[i]\in\safest}$.
%, where $\lambda(i)$ is the prefix sequence of $i$ element of $\lambda$.
Moreover, given a set of target states $\reachst\subseteq \States$, the \emph{reachability goal $\Gamma_\reachst$} can be defined as $\Gamma_\reachst=\set{\lambda\in paths(M) \mid \exists i . \lambda[i]\in\reachst}$.
\end{definition}

\begin{definition}[Winning strategies]
Given a transition network $M$, a set of agents $U\subseteq\Agt$ with goal $\Gamma_U$, and a set of strategies $\Sigma_U^\recl$, we say that $U$ have a (surely winning) strategy to achieve $\Gamma_U$ iff there exists a strategy $\sigma_U\in\Sigma_U^\recl$ such that $paths_M(\sigma_U) \subseteq \Gamma_U$.
%Given a transition network $M$, a set of High agents $\High$, a set of Low agents $\Low$ with goal $\Gamma_\Low$, and a set of ``sensitive'' actions $A$, we say that \Low have a sure winning strategy to achieve $\Gamma_\Low$ iff there exists a (perfect recall) strategy $\sigma_\Low$ for $\Low$, such that $paths_M(\sigma_\Low) \subseteq \Gamma_\Low$. We can write this as $SW_L(M,A,\Gamma_\Low)$.
\end{definition}

%\todo{EXAMPLE: reachability goal for Low (eventually Low breaks into High's account or the system goes to $s_{Herr}$). Demonstrate that there is a winning strategy for Low in one model, but no winning strategy in the other model.}

\begin{example}\label{exmpl:goal}
Consider the models in Figure~\ref{fig:banking2a} and Figure~\ref{fig:banking2b}, and suppose that \Low wants to access \High's bank account. This can be expressed by the reachability goal $\Gamma_\reachst$ with $\reachst = \set{s_{15},s_{16}}$ as the target states.
In fact, \Low also wins if \High executes an out-of-place action (cf.~Example~\ref{ex:ni} for detailed explanation).
In consequence, the winning states for \Low are $\reachst = \set{s_{15},s_{16},\sherr}$.
Note that \Low has no strategy that guarantees $\Gamma_\reachst$ in model $M_a$ (although information is theoretically leaking to \Low as the model does not satisfy noninterference). Even performing the action $\chkWeb$ does not help, because \Low cannot distinguish between states $s_{11}$ and $s_{13}$, and there is no single action that succeeds for both $s_{11},s_{13}$. Thus, \Low does not know whether to use $auth_A$ or $auth_B$ to get access to \High's bank account.

On the other hand, \Low has a winning strategy for $\Gamma_\reachst$ in model $M_b$. The strategy is to execute $\chkWeb$ after \High publishes her mother's maiden name, and afterwards do $auth_A$ in states $s_{11},s_{12}$ (after observing $\obsL{MNameA}$) or $auth_B$ if the system gets to $s_{13},s_{14}$ (i.e., after observing $\obsL{MNameB}$).
\qed
\end{example}

\extended{
  % **********
  In what follows, we will look at the \Low's strategic ability to harm desirable behavior of the system.
  % **********
}%end extended

\section{Security as Strategic Property}\label{sec:effsec}

The property of noninterference looks for \emph{any} leakage of \emph{any} information. If one can possibly happen in the system, then the system is deemed insecure. In many cases, this view is too strong. There are lots of information pieces that can leak out without bothering any interested party. Revealing the password to your web banking account can clearly have much more disastrous effects than revealing the price that you paid for metro tickets on your latest trip to Paris. Moreover, the relevance of an information leak cannot in general be determined by the type of the information.
Think, again, of revealing the maiden name of your mother vs.~the maiden name of your grandmother. The former case is potentially dangerous since the maiden name of one's mother is often used to grant access to manage banking services by telephone. Revealing the latter is quite harmless to most ends and purposes.\extended{ \footnote{
  Note, however, that revealing the maiden name of your maternal grandmother is potentially dangerous to your \emph{mother} if she enables banking by telephone. }
}%end extended

In this paper, we suggest that the relevance of information leakage should be judged by the extent of damage that the leak allows the attackers to inflict on the goal of the system.
Thus, as the first step, we define the security of the system in terms of damaging abilities of the Low players.

%\subsection{Security as Strategic Property}

In order to assess the relevance of information flow from High to Low, we will look at the resulting strategic abilities of Low.
\extended{
  % **********
  For this, two design choices have to be made. First, what type of strategies are Low supposed to use? Secondly, what is the goal that they are assumed to pursue? The second question is especially important, because typically we do not know (and often do not care about) the real goals of potential attackers. What we know, and what we want to protect, is the objective that the system is built for.

  We follow the game-theoretic tradition of looking at the worst case and assuming the opponents to be powerful and adversary.
  Thus, we assume \Low to use perfect recall strategies.
  % **********
}%end extended
Moreover, we assume that the goal of \Low is to violate a given goal of the system.
The goal can be a functionality or a security requirement, or a combination of both. Moreover, it can originate from a private goal of the High players, an objective ascribed to the system by its designer (e.g., the designer of a contract signing protocol), or a combination of requirements specified by the owner/main stakeholder in the system (for instance, a bank in case of a web banking infrastructure).

\begin{definition}[Effective security]
Let $M$ be a transition network with some Low players $\Low\subseteq\Agt$, and let $\Gamma$ be the goal of the system.
We say that $M$ is \emph{effectively secure} for $(\Low,\Gamma)$ iff $\Low$ does not have a strategy to enforce $\overline{\Gamma}$, where $\overline{X}$ denotes the complement of set $X$.
That is, the system is effectively secure iff the attackers do not have a strategy that ensures an execution violating the goal of the system.
We will use $ES(M,\Low,\Gamma)$ to refer to this property.
%not $\Win_\Low(M,\overline{\Gamma})$.
\end{definition}

Besides judging the effective security of a system, we can also use the concept to compare the security level of two models.

\begin{definition}[Comparative effective security]\label{def:compeffsec}
Let $M,M'$ be two models, and $\Gamma$ be a goal in $M,M'$ (i.e., $\Gamma\subseteq paths(M)\cup paths(M')$).
We say that:
\begin{itemize2}
\item $M$ has \emph{strictly less effective security} than $M'$ for $(\Low,\Gamma)$, denoted $M \prec_{\Low,\Gamma} M'$, iff $ES(M',\Low,\Gamma)$ but not $ES(M,\Low,\Gamma)$.
  \extended{
    % **********
    That is, \Low can enforce a behavior of the system that violates its goal in model $M$ but not in $M'$.
    We denote the relationship by $M \prec_{\Low,\Gamma} M'$;
    % **********
  }%end extended

  \item $M'$ is \emph{at least as effectively secure as $M$} for $(\Low,\Gamma)$, denoted $M \preceq_{\Low,\Gamma} M'$, iff $ES(M,\Low,\Gamma)$ implies $ES(M',\Low,\Gamma)$;

\item $M$ is \emph{effectively equivalent to $M'$} for $(\Low,\Gamma)$, denoted $M \simeq_{\Low,\Gamma} M'$, iff either both $ES(M,\Low,\Gamma)$ and $ES(M',\Low,\Gamma)$ hold, or both do not hold.
\end{itemize2}
\end{definition}

Thus, if in one of the models \Low can construct a more harmful strategy then the model displays lower {effective security} than the other model. Conversely, if both models allow only for the same extent of damage then they have the same level of effective security. This way, we can order different alternative designs of the system according to the strategic power they give away to the attacker.

\begin{example}\label{exmpl:effective-secure}
Consider models $M_a, M_b$ from Figure~\ref{fig:banking2a} and Figure~\ref{fig:banking2b}, and let the goal $\Gamma$ be to prevent \Low from accessing \High's bank account. Thus, $\Gamma$ is the safety goal $\Gamma_\safest$ with $\safest = \States\setminus\set{s_{15},s_{16},\sherr}$, and therefore $\overline{\Gamma} = \Gamma_\reachst$  with $\reachst = \set{s_{15},s_{16},\sherr}$.
As we saw in Example~\ref{exmpl:goal}, \Low has no strategy to guarantee $\overline{\Gamma}$ in $M_a$, but she has a surely winning strategy for $\overline{\Gamma}$ in $M_b$. Thus, $M_b$ is strictly less effectively secure than $M_a$, i.e., $M_b \prec_{\Low,\Gamma} M_a$.
\end{example}

\extended{
  % **********
  We will further use the concept to compare security of alternative information flows based on the same (or similar) action-transition structures.
  % **********
}%end extended

\putaway{
  %**********
  \subsection{Comparing Effective Security of Information Flows}

  In this paper, we focus on the following question: How to design the information flow in a system so that no strategically important information is revealed? In the simplest case, we may want to compare different information flows that are possible in a given action-transition structure. To this end, we define the notion of transition equivalence.

  \begin{definition}[Transition-equivalent models]
  The \emph{action-transition frame} of a model $M$, which we denote by $F_M$, is the network $M$ minus the observation functions $obs(\cdot)$.
  We will denote the set of models based on frame $F$ by $\modl(F)$.
  Two models are \emph{transition-equivalent} iff they are based on the same frame.
  \end{definition}

  Now, two models differing only in the informational aspect can be compared by the damaging abilities of \Low, according to Definition~\ref{def:compeffsec}.

  \begin{example}
  The two models $M_a$ and $M_b$ of Figure~\ref{fig:banking2a} and Figure~\ref{fig:banking2a} are transition-equivalent. All the possible actions and transitions in the two cases of the scenario are the same, however the content of information revealed to \Low which affects her observations in some states (and eventually her strategic abilities) are different in the two scenarios.
  \end{example}

  Clearly, if \Low's observation function in $M$ refines the one in $M'$ then $M'$ offers at least as much effective security as $M$ for any goal.

  \begin{proposition}\label{prop:refines}
  Let $M,M'$ be transition-equivalent models such that, for any pair of states $s_1,s_2\in\States$, if $[s_1]_\Low^M = [s_2]_\Low^M$ then $[s_1]_\Low^{M'} = [s_2]_\Low^{M'}$.
  Then, for every goal $\Gamma \subseteq paths(M)$, we have that $M \preceq_{\Low,\Gamma} M'$.
  \end{proposition}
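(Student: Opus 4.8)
The plan is to prove the contrapositive of $M \preceq_{\Low,\Gamma} M'$: assuming $\Low$ has a surely winning strategy for $\overline{\Gamma}$ in $M'$, I will exhibit one in $M$. The key structural observation is that transition-equivalence means $M$ and $M'$ share the same action-transition frame, so their tree unfoldings $T(M)$ and $T(M')$ have identical node sets, edges, and action labels, differing only in the observation labelling; in particular $paths(M)=paths(M')$, so $\overline{\Gamma}$ denotes the same path set in both and the assertion makes sense for every $\Gamma\subseteq paths(M)$. I would therefore take the given winning strategy $\sigma'_\Low\in\Sigma_\Low^\recl$ for $M'$ and reuse it \emph{verbatim} as a candidate $\sigma_\Low$ in $M$, viewing it as the same function on the common node set $nodes(T(M))=nodes(T(M'))$.

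The heart of the argument is to show that $\sigma'_\Low$ is a \emph{legal} perfect-recall strategy in $M$, i.e.\ uniform with respect to the $M$-observations. The hypothesis states that $[\cdot]_\Low^M$ refines $[\cdot]_\Low^{M'}$ on states, equivalently that $[\cdot]_\Low^{M'}=f\circ[\cdot]_\Low^M$ for some well-defined $f$ (well-defined precisely because $[s_1]_\Low^M=[s_2]_\Low^M$ forces $[s_1]_\Low^{M'}=[s_2]_\Low^{M'}$). I then need to lift this refinement from single states to whole histories, which is the following Lemma: $obs_\Low^M(h)=obs_\Low^M(h')$ implies $obs_\Low^{M'}(h)=obs_\Low^{M'}(h')$. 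Granting the Lemma, any two histories that $\sigma_\Low$ must treat identically in $M$ are also treated identically by $\sigma'_\Low$ in $M'$, so the $M'$-uniformity of $\sigma'_\Low$ yields its $M$-uniformity. The intuition is exactly the content of the proposition: refining $\Low$'s observations only enlarges the set of admissible strategies, so a strategy that was legal under the coarser view remains legal under the finer one.

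The main obstacle is this Lemma, because the accumulated observation $obs_\Low$ is defined modulo removal of stuttering, and the positions at which stuttering is collapsed may depend on the observation function, hence differ between $M$ and $M'$. I would handle this by passing through the raw (uncollapsed) observation sequences: along any history the raw $M'$-sequence is exactly $f$ applied pointwise to the raw $M$-sequence, and the stutter-collapse operation commutes with relabelling, since consecutive duplicates in the $M$-sequence remain duplicates after applying $f$. Concretely, $\mathrm{collapse}\circ(\mathrm{map}\,f)=\mathrm{collapse}\circ(\mathrm{map}\,f)\circ\mathrm{collapse}$, whence $obs_\Low^{M'}(h)=\mathrm{collapse}\bigl(f(obs_\Low^M(h))\bigr)$ with $f$ understood pointwise. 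The right-hand side depends on $h$ only through $obs_\Low^M(h)$, so the Lemma follows at once. (If $\Low$ is a genuine coalition, I read the hypothesis componentwise and apply this argument to each agent's observation separately, matching the definition $\Sigma_\Low^\recl=\times_{u\in\Low}\Sigma_u^\recl$.)

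Finally I would close the loop. Since $\sigma_\Low$ and $\sigma'_\Low$ are the same function on the shared node set, and the outcome tree is obtained by pruning branches solely according to the action sets prescribed at each node, the two outcomes coincide as subtrees of the common frame-unfolding; in particular $paths_M(\sigma_\Low)=paths_{M'}(\sigma'_\Low)$. The latter is contained in $\overline{\Gamma}$ by assumption, so $\sigma_\Low$ surely enforces $\overline{\Gamma}$ in $M$, i.e.\ $\neg ES(M,\Low,\Gamma)$. This is precisely the contrapositive of $ES(M,\Low,\Gamma)\Rightarrow ES(M',\Low,\Gamma)$, establishing $M\preceq_{\Low,\Gamma}M'$.
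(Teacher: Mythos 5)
Your proof is correct and follows essentially the same route as the paper's own (very terse) argument, which simply observes that every strategy of \Low in $M'$ is also a legal strategy of \Low in $M$, so a surely winning strategy for $\overline{\Gamma}$ in $M'$ transfers to $M$. Your write-up merely fills in the detail the paper leaves implicit --- lifting the state-level refinement of observations to equality of accumulated observation sequences along histories, including the stutter-collapse issue --- which is a worthwhile elaboration but not a different approach.
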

  \begin{proof}
  Note that all the strategies of \Low in $M'$ are also \Low's strategies in $M$. Thus, if \Low have a surely winning strategy to enforce $\overline{\Gamma})$ in $M'$ then they also have a surely winning strategy for $\overline{\Gamma})$ in $M$.
  \end{proof}
  %**********
}% end-putaway

\section{Effective Information Security}\label{sec:effinfosec}

\newcommand{\Ideal}{\mathit{Ideal}}

%In this paper, we focus on the following question: How to design the information flow in a system so that no strategically important information is revealed?
We will now propose a scheme that allows to determine whether a given model of interaction leaks relevant information or not.
We use the idea of refinement checking from process algebras, where a process is assumed correct if and only if it refines the ideal process~\cite{Roscoe97csp}.
A similar reasoning scheme is also used in analysis of multi-party computation protocols (a protocol is correct iff it is equivalent to the ideal model of the computation~\cite{Goldreich87mentalgame}).

\extended{
  % **********
  To this end, we need a suitable notion of refinement or equivalence, and a suitable definition of the ideal model.
  The former is straightforward: we will use the $\simeq_{\Low,\Gamma}$ relation.
  The latter is more involved. If the reference model ascribes too much observational capabilities to the Low players then the concept will be ill-defined (it will classify insecure systems as secure). If the reference model assigns Low with too little information then the concept will be useless (no realistic system will be ever classified as secure).
  %Thus, the ideal model must provide the right abstraction level

  In what follows, we first explain and define the concept of an idealized variant of a model. Then in Section~\ref{sec:blinding-naive} we do our first take on the idealized variant by defining the \textit{blind variant} of a model. In Section~\ref{sec:blind-noninterf} we first define the \textit{non-interfering idealized} variant of a model.
  % **********
}%end extended

\subsection{Ability-Based Security of Information Flows}

Definition~\ref{def:compeffsec} allows for comparing the effective security of two alternative information flows.
\extended{
  % **********
  We will say two models differ only in their information flow if they are \emph{transition equivalent}:

  \begin{definition}[Transition-equivalent models]
    The \emph{action-transition frame} of a model $M$, which we denote by $F_M$, is the network $M$ minus the observation functions $obs(\cdot)$.
    We will denote the set of models based on frame $F$ by $\modl(F)$.
    Two models are \emph{transition-equivalent} iff they are based on the same frame.
  \end{definition}

  Then , $(F,obs)\prec (F,obs')$ says that the observation function $obs$ leaks more relevant information than $obs'$ in the transition-action frame $F$.
  % **********
}%end extended
However, we usually do not want to compare several alternative information flows. Rather, we want to determine if a single given model $M$ reveals relevant information or not.
\extended{How can we achieve that? }A natural idea is to compare the effective security of $M$ to an \emph{ideal model}, i.e.\short{ a variant of $M$ that }\extended{a model that is transition equivalent to the original model and moreover }leaks no relevant information by construction.
Then, a model is effectively information-secure if it has the same level of effective security as its idealized variant:

\begin{definition}[Effective information security]
Let $M$ be a transition network with some Low players $\Low\subseteq\Agt$, and let $\Gamma$ be the goal of the system.
Moreover, let $\Ideal(M)$ be the idealized variant of $M$.
We say that $M$ is \emph{effectively information-secure} for $(\Low,\Gamma)$ iff $M \simeq_{\Low,\Gamma} \Ideal(M)$.
\end{definition}

%Having a transitional network $M$, how we can know if we are able to increase the security of the system by reducing the strategic ability of the players \Low by modifying its epistemic abilities? One way to approach this question is comparing the strategic ability of players in a given model, with their strategic ability in an idealized model in which the \Low players have the minimum epistemic ability. If these two models are similarly s-secure (or $E_n$ secure), we may argue that reducing the amount of information leaked to \Low players wouldn't change their strategic ability to achieve their goal, and therefore is not necessary or helpful. On the other hand if the model is less s-secure than the idealized model, it shows that it is possible to reduce the strategic ability of the \Low players by reducing the amount of leaking information to them.

How do we construct the idealized variant of $M$? The idea is to ``blur'' observations of Low so that we obtain a variant of the system where the observational capabilities of the attackers are minimal.
\extended{
  % **********
  What observational capabilities are ``minimal''?
  We start with the following, rather naive definition of idealization.

  \subsection{Blinding the Low Players: First Attempt}\label{sec:blinding-naive}

  By using the idealized model, we intend to distinguish to what extent the damaging abilities of Low are due to the ``hard'' actions available in the system, and to what extent they are due to the available information flow.
  In other words, we want to see how far one can minimize the strategic ability of the Low players by reducing their observational abilities in the model.

  %For reducing the observational abilities of the Low player, the first step is to take away the information it can get from observing it's own available actions in a state. Because however we try to obscure the observations of a player in a state, it still may gain information just by ``observing'' its available actions.

  The first take to define an idealized model is to assume that \Low \emph{never sees anything}. To this end, we simply assume that $obs(s,\Low)$ is the same in all states $s\in\States$.

  \begin{definition}[Idealized model, first take]
  Having a transition network $M$ based on frame $F$, and a set of players \Low, we define the \emph{blind variant of $M$} as $M' = (F,obs')$ such that $obs'(q,l)=obs(q',l)$ for every $q,q'\in\States$ and $l\in\Low$.
  \end{definition}

  %\begin{definition}[Idealized model, first take]
  %Let $F'=total_L(M)$ be the action transition framework of the L-total variant of $M$. We define the \emph{blind variant of $M$} as $M' = (F',obs')$ such that $obs'(q,l)=obs'(q',l)$ for every $q,q'\in\States$ and $l\in\Low$.
  %\end{definition}

  In many scenarios this is too much. In particular, a Low agent may have access to perfectly legitimate observations that are inherent to maintaining their private affairs, such as checking the balance of their bank account, listing the files stored on in their private file space, etc.

  %We say that a model $M$, is as secure as the blinded idealized model for the players $\Low$ with the goal $\Gamma_\Low$, iff $M \approx_{\Low,\Gamma_\Low} M'$, for some $M'\in \Ideal^{\blind}_L(M)$. We say that it is less secure than the blinded idealized model iff $M \prec_{\Low,\Gamma_\Low} M'$.

  %In a blinded model the \Low players cannot distinguish between any two states of the system using their observations in those states. Let's say the \Low players have a winning strategy even in this completely obscured model. Then clearly they would have a winning strategy in the original model too. Moreover we can say that no matter how some empowered designer of the system try to obscure the events and the observations from the Low player, the system will remain insecure. In other words in this case the system is inherently insecure.

  %By using the idealized model, we intend to distinguish to what extent the damaging abilities of Low are due to the ``hard'' actions available in the system, and to what extent they are due to the available information flow.
  %In other words, we want to see how far one can minimize the strategic ability of the Low players by reducing their observational abilities in the model. The blinded variant does it by assuming that \Low \emph{never see anything}.
  % **********
}%end extended

\subsection{Idealized Models Based on Noninterference}\label{sec:blind-noninterf}

\extended{
  % **********
  Below we propose a weaker form of ``blinding'' that will be used to single out the damaging abilities that are due to Low \emph{observing High's actions}, rather than due to \emph{any} observations that Low can happen to make.
  % **********
}%end extended
We begin by recalling the notion of \emph{term unification} which is a fundamental concept in automated theorem proving and logic programming~\cite{Robinson65resolution}.
Given two terms $t_1,t_2$, their unification ($t_1\equiv t_2$) can be understood as a declaration that, from now on, both terms refer to exactly the same underlying object. In our case the terms are observation labels from the set $Obs$. A unification can be seen as an equivalence relation on observation labels, or equivalently as a partitioning of the labels into equivalence classes. The application of the unification to a model yields a similar model where the equivalent observations are ``blurred''.

\begin{definition}[Unification of observations]
Given a set of observation labels $Obs$, a \emph{unification on $Obs$} is any equivalence relation $\mathcal{U} \subseteq Obs\times Obs$.

Given a model $M = \tuple{\States,s_0,\Agt,\Act,do,Obs,obs}$ and a unification $\mathcal{U} \subseteq Obs\times Obs$, the \emph{application of $\mathcal{U}$ to $M$} is the model $\mathcal{U}(M) = \tuple{\States,s_0,\Agt,\Act,do,Obs',obs'}$, where:
\ $Obs' = \set{[o]_{\mathcal{U}} \mid o\in Obs}$ replaces $Obs$ by the set of equivalence classes defined by $\mathcal{U}$, and
\ $obs'(q,u) = [obs(q,u)]_{\mathcal{U}}$ replaces the original observation in $q$ with its equivalence class for any $u\in\Agt$.
\end{definition}

%\todo{EXAMPLE (unification in the web banking model)}
\extended{%begin extended
  % **********
  \begin{figure*}[!t]\centering
  \hspace*{-30pt}
  \begin{tikzpicture}[thick,scale=0.6, every node/.style={transform shape}]
  \input{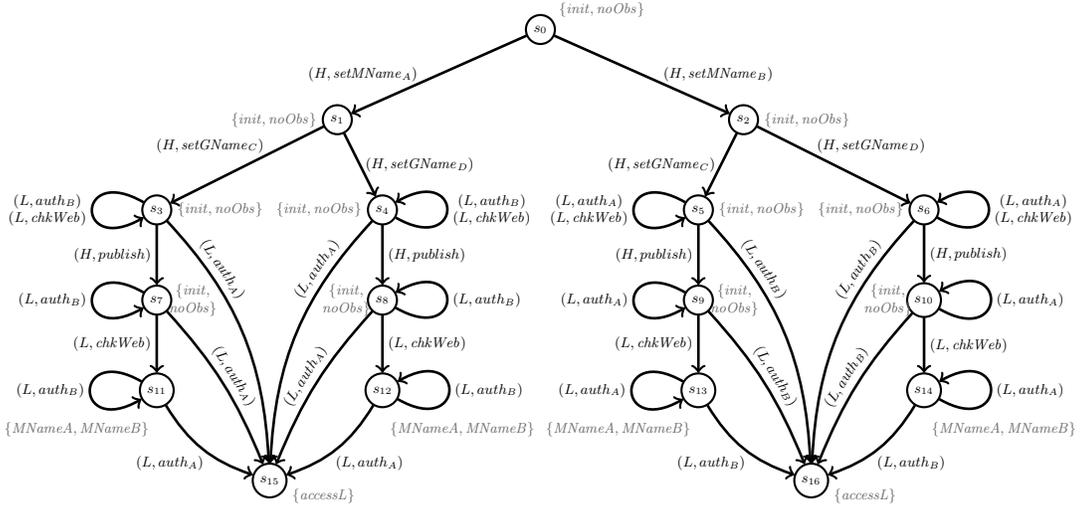}
  \end{tikzpicture}
  \caption{An example of unification of observations.}
  \label{fig:banking-unification}
  \end{figure*}

  \begin{example}
  Figure~\ref{fig:banking-unification} depicts the model obtained from $M_b$ by unifying observations $\obsL{MNameA}$ and $\obsL{MNameB}$ into $\set{\obsL{MNameA},\obsL{MNameB}}$, observations $\obsL{init}$ and $\obsL{noObs}$ into $\set{\obsL{init,noObs}}$, and observation $\obsL{accessL}$ into $\set{\obsL{accessL}}$.
  \end{example}
  % **********
}%end extended

Our reference model for $M$ will be the variant of $M$ where noninterference is obtained by the minimal necessary ``blurring'' of \Low's observations.

\begin{definition}[Noninterferent idealized model]\label{def:NI-idealized}
Having a transition network $M$ and a set of ``low'' players \Low, we define the \emph{noninterfering idealized variant of $M$} as $\mathcal{U}(M)$ such that:
\begin{enumerate}
\item[(i)] $\NI_{\mathcal{U}(M)}(\High,\Low)$, and
\item[(ii)] for every $\mathcal{U}' \subsetneq \mathcal{U}$ it is not the case that $\NI_{\mathcal{U}'(M)}(\High,\Low)$.
\end{enumerate}
%
%We will denote $\Ideal^{\NIblind}_L(M)=\set{M'\in tEQ(M)\mid \NI_{M'}(\Agt\setminus\Low,\Low,S)}$.
\end{definition}

We need to show that the concept of noninterferent idealized model is well defined. The proof is constructive, i.e., given a model $M$, we first show how one can build its idealized variant, and then show that it is unique.%
\short{\footnote{
  We will only sketch the proofs due to lack of space. The complete proofs can be found in the extended version of the paper, available on  \url{https://www.arxiv.org} . }
}% end-short

\begin{theorem}\label{prop:uniqueness}
For every transition network $M$, there is always a unique unification $\mathcal{U}$ satisfying properties (i) and (ii) from Definition~\ref{def:NI-idealized}.
\end{theorem}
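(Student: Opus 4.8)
The plan is to turn conditions (i) and (ii) of Definition~\ref{def:NI-idealized} into a single, monotone containment condition on $\mathcal{U}$, and then obtain both existence and uniqueness from the fact that the relevant family of unifications is closed under intersection.

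First I would record that applying a unification $\mathcal{U}$ changes neither $\States$, $s_0$, nor $do$, so the multi-step function $exec$ is identical in $M$ and in $\mathcal{U}(M)$; only the observations are coarsened, via $obs'(q,u)=[obs(q,u)]_{\mathcal{U}}$. Reading off Definition~\ref{def:noninterference}, $\NI_{\mathcal{U}(M)}(\High,\Low)$ therefore holds exactly when, for every $\alpha\in(\Agt\times\Act)^{*}$ and every $\low\in\Low$,
\[
[obs(exec(\alpha),\low)]_{\mathcal{U}} = [obs(exec(\Purge_\High(\alpha)),\low)]_{\mathcal{U}},
\]
i.e.\ when the corresponding pair of labels lies in $\mathcal{U}$. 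Collecting all pairs that noninterference forces to be identified,
\[
\rstar = \set{\,(obs(exec(\alpha),\low),\, obs(exec(\Purge_\High(\alpha)),\low)) \mid \alpha\in(\Agt\times\Act)^{*},\ \low\in\Low\,},
\]
the reduction reads: $\mathcal{U}(M)$ is noninterferent if and only if $\rstar\subseteq\mathcal{U}$. In particular noninterference is upward monotone in $\mathcal{U}$ --- enlarging the unification only blurs more observations together and so preserves all identifications.

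Next I would consider the family $\mathcal{F}=\set{\mathcal{U}\mid \mathcal{U}\text{ is an equivalence relation on }Obs\text{ and }\rstar\subseteq\mathcal{U}}$ of all unifications achieving noninterference. It is nonempty, since the total relation $Obs\times Obs$ belongs to it, and it is closed under arbitrary intersection: an intersection of equivalence relations is again an equivalence relation, and it still contains $\rstar$ whenever every member does. Hence $\mathcal{F}$ possesses a unique least element, which I would take as the idealizing unification $\ustar=\bigcap\mathcal{F}$, equivalently the reflexive--symmetric--transitive (equivalence) closure of $\rstar$.

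Finally I would verify that $\ustar$ is the unique witness for the theorem. Property (i) is immediate from $\rstar\subseteq\ustar$. For (ii), any equivalence relation $\mathcal{U}'\subsetneq\ustar$ cannot contain $\rstar$ --- otherwise it would lie in $\mathcal{F}$ yet sit strictly below its least element --- so by the reduction $\mathcal{U}'(M)$ violates noninterference. For uniqueness, let $\mathcal{U}$ satisfy (i) and (ii); by (i) and the reduction $\rstar\subseteq\mathcal{U}$, whence $\mathcal{U}\in\mathcal{F}$ and $\ustar\subseteq\mathcal{U}$; were the inclusion strict, $\ustar$ would be a strictly smaller unification still satisfying noninterference, contradicting (ii), so $\mathcal{U}=\ustar$. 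The step I expect to carry the real weight is the first one: the theorem holds not because minimal elements of a poset are generically unique (they are not), but because the characterization $\rstar\subseteq\mathcal{U}$ makes the admissible unifications an intersection-closed (principal) filter with a genuine least element. Getting this reduction exactly right --- in particular that $exec$ is invariant under unification, and that equality of $\mathcal{U}$-classes coincides with membership of the relevant pair in $\mathcal{U}$ --- is where the argument must be careful.
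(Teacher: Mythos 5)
Your proof is correct, but it takes a genuinely different route from the paper's. You work directly with the trace-based definition of noninterference: since a unification changes neither $do$ nor $exec$, the property $\NI_{\mathcal{U}(M)}(\High,\Low)$ unfolds to the requirement that every pair $\bigl(obs(exec(\alpha),\low),\, obs(exec(\Purge_\High(\alpha)),\low)\bigr)$ lie in $\mathcal{U}$; the admissible unifications are then exactly the equivalence relations above this set of forced pairs, a family that is nonempty and closed under intersection, so it has a unique least element (the equivalence closure of the forced pairs), and conditions (i), (ii) and uniqueness follow by routine lattice reasoning. The paper instead goes through unwinding relations (Proposition~\ref{prop:unwind-nonint}): it defines a relation $\rstar$ on \emph{states} as the least fixpoint of a relation transformer (closing the High-reachability pairs under simultaneous Low actions and transitivity), shows that $\rstar$ is contained in every unwinding relation, and only then lifts it to the observation-level unification $\ustar$. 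The two constructions produce the same object, but they buy different things: your argument is shorter and needs no auxiliary machinery, whereas the paper's fixpoint characterization on the finite state space is what yields the polynomial-time algorithm for computing the idealized variant claimed in the conclusions --- your defining set quantifies over all action sequences $\alpha$ and so does not directly give an effective procedure. One cosmetic caveat: you reuse the symbol $\rstar$ for a relation on observation labels, while the paper reserves $\rstar$ for the state relation and $\ustar$ for the observation-level unification; renaming would avoid confusion.
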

\extended{
%**********
The proof of Theorem~\ref{prop:uniqueness} needs some preliminary steps. As the first step, we recall and adapt the concept of unwinding relations~\cite{\extended{goguen1984unwinding,rushby1992noninterference,}Meyden10comparison}. Unwinding is constructed analogously to the standard notion of bisimulation, and requires Low's uncertainty to be a fixpoint of an appropriate relation transformer.
Unwinding relations are important because they characterize \NItxt in purely structural terms. Moreover, existence of an unwinding relation is usually easier to verify than proving \NItxt directly. We then use the concept of unwinding relation to define relation $\rstar$ on the states of a transition network $M$. We use this relation to construct and prove the uniqueness of the idealized variant of $M$.
}%end extended
\short{
The proof of Theorem~\ref{prop:uniqueness} needs some preliminary steps. First, we recall the concept of unwinding relations~\cite{\extended{goguen1984unwinding,rushby1992noninterference,}Meyden10comparison}.
Unwinding relations are important because they characterize \NItxt in purely structural terms. Moreover, existence of an unwinding relation is usually easier to verify than proving \NItxt directly. We then use the concept of unwinding relation to define relation $\rstar$ on the states of a transition network $M$. We use this relation to construct and prove the uniqueness of the idealized variant of $M$.
}%end short
\begin{definition}[Unwinding for Noninterference~\cite{Meyden10comparison}]\label{def:unwind-nonint}
\extended{Let $M$ be a transition network, $\High$ a set of High agents, and $\Low$ a set of Low agents. Then,}
$\uwNIL \subseteq \States \times \States$ is an \emph{unwinding relation} iff it is an equivalence relation satisfying the conditions of \emph{output consistency (OC)}, \emph{step consistency (SC)}, and \emph{local respect (LR)}. That is, for all states $s,t \in \States$:
\begin{description}
\item[(OC)] If $s\uwNIL t$ then $[s]_\Low=[t]_\Low$;

\item[(SC)] If $s\uwNIL t$, $u\in \Low$, and $a\in \Act$ then $do(s,u,a)\uwNIL do(t,u,a)$;

\item[(LR)] If $u\in \High \textrm{ and } a\in \Act$ then $s \uwNIL do(s,u,a)$.
\end{description}
\end{definition}
%The above definition of unwinding, and the following correspondence theorem, are similar to the ones in~\cite{goguen1984unwinding,rushby1992noninterference}. However, we had to extend them to handle a broader class of models which are not necessarily total on input.
\begin{proposition}[\cite{Meyden10comparison}]\label{prop:unwind-nonint}
%In a transition network $M$, with set of High agents $\High$ and a set of Low agents $\Low$, we have
$NI_M(\High,\Low)$ iff there exist an unwinding relation $\uwNIL$ on the states of $M$ that satisfies (OC), (SC) and (LR).
\end{proposition}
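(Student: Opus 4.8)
The plan is to turn the problem into a purely structural one by way of the unwinding characterisation (Proposition~\ref{prop:unwind-nonint}), and then to show that the collection of unifications achieving noninterference has a \emph{least} element; existence and uniqueness of a minimal element both follow at once. The guiding observation is that, among the unwinding conditions of Definition~\ref{def:unwind-nonint}, only \textbf{(OC)} mentions observations, whereas \textbf{(SC)} and \textbf{(LR)} speak solely about the transition function $do$. Since $M$ and $\mathcal{U}(M)$ share the same states and the same $do$, the ``structural part'' of an unwinding relation is fixed by the frame of $M$ and is insensitive to which unification we apply; a unification only has to be coarse enough to make \textbf{(OC)} hold on that fixed skeleton.

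Accordingly, I would first define $\rstar$ to be the least equivalence relation on $\States$ satisfying \textbf{(SC)} and \textbf{(LR)}. Its existence follows from a routine intersection argument: $\States\times\States$ is an equivalence relation satisfying both conditions, and reflexivity, symmetry, transitivity, the ``contains-these-pairs'' requirement \textbf{(LR)}, and the closure condition \textbf{(SC)} are all preserved under arbitrary intersections; hence the intersection of all equivalence relations satisfying \textbf{(SC)} and \textbf{(LR)} is again such a relation and is the smallest one. I would then let $\mathcal{U}_0$ be the least unification on $Obs$ that identifies $[s]_\low$ with $[t]_\low$ whenever $s\mathrel{\rstar}t$ and $\low\in\Low$ (the equivalence closure of these generating pairs). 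Property~(i) is now immediate: $\rstar$ is an unwinding relation for $\mathcal{U}_0(M)$, since it still satisfies \textbf{(SC)} and \textbf{(LR)} (these are unchanged by unification) and satisfies \textbf{(OC)} in $\mathcal{U}_0(M)$ by the very definition of $\mathcal{U}_0$; Proposition~\ref{prop:unwind-nonint} then yields $\NI_{\mathcal{U}_0(M)}(\High,\Low)$.

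The crux --- and the step I expect to be the main obstacle --- is minimality: every unification $\mathcal{U}$ with $\NI_{\mathcal{U}(M)}(\High,\Low)$ must contain $\mathcal{U}_0$. By Proposition~\ref{prop:unwind-nonint} there is an unwinding relation $\uwNIL$ on $\mathcal{U}(M)$; it satisfies \textbf{(SC)} and \textbf{(LR)}, and because $\rstar$ is the \emph{least} equivalence relation with those two properties, $\rstar\subseteq\,\uwNIL$. Here it matters that $\rstar$ is computed identically on $M$ and on $\mathcal{U}(M)$, which is exactly what lets a single relation serve as a lower bound uniformly across all unifications. Output consistency of $\uwNIL$ in $\mathcal{U}(M)$ then forces, for every pair $s\mathrel{\rstar}t$ (so $s\uwNIL t$) and every $\low\in\Low$, that $[s]_\low$ and $[t]_\low$ lie in the same class of $\mathcal{U}$. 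Thus $\mathcal{U}$ contains all generators of $\mathcal{U}_0$ and, being an equivalence relation, contains their equivalence closure, i.e. $\mathcal{U}_0\subseteq\mathcal{U}$.

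Finally, existence and uniqueness follow formally. The unification $\mathcal{U}_0$ satisfies~(i) and, being the least NI-achieving unification, also satisfies~(ii), since any unification $\mathcal{U}'\subsetneq\mathcal{U}_0$ cannot contain $\mathcal{U}_0$ and hence --- by the minimality claim --- cannot achieve noninterference. Conversely, if some $\mathcal{U}$ satisfies~(i) and~(ii), then $\mathcal{U}_0\subseteq\mathcal{U}$ by the minimality claim, while $\mathcal{U}_0$ itself satisfies~(i); were the inclusion strict, $\mathcal{U}_0$ would be a proper NI-achieving sub-unification of $\mathcal{U}$, contradicting~(ii). Hence $\mathcal{U}=\mathcal{U}_0$, so $\mathcal{U}_0$ is the unique unification satisfying (i) and (ii).
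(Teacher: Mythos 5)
Your proposal does not prove the stated proposition: what you have written is a proof of Theorem~\ref{prop:uniqueness} (existence and uniqueness of the minimal unification producing the noninterferent idealized variant), and it explicitly uses Proposition~\ref{prop:unwind-nonint} as a black box in both directions of your argument (``Proposition~\ref{prop:unwind-nonint} then yields $\NI_{\mathcal{U}_0(M)}(\High,\Low)$'' and ``By Proposition~\ref{prop:unwind-nonint} there is an unwinding relation $\uwNIL$ on $\mathcal{U}(M)$''). As an argument for Proposition~\ref{prop:unwind-nonint} itself this is circular: nowhere do you establish either implication of the equivalence between $\NI_M(\High,\Low)$ and the existence of an equivalence relation satisfying (OC), (SC) and (LR).

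A genuine proof requires two separate arguments. For the direction from an unwinding relation to noninterference, one shows by induction on the length of $\alpha\in(\Agt\times\Act)^*$ that $exec(\alpha)\uwNIL exec(\Purge_\High(\alpha))$: a High step keeps the two executions related by (LR) together with symmetry and transitivity, a Low step preserves relatedness by (SC), and (OC) then yields $[exec(\alpha)]_{\low}=[exec(\Purge_\High(\alpha))]_{\low}$ for every $\low\in\Low$. For the converse, one must \emph{construct} an unwinding relation assuming only $\NI_M(\High,\Low)$; the paper does this (in the appendix, for the more general partial-transition-network version, Proposition~\ref{prop:unwind-nonint-nontotal}, of which the present statement is the total special case) by defining $s\sim t$ iff for every $\alpha$ and every $\low\in\Low$ the observations of $\low$ after executing $\alpha$ from $s$ and from $t$ coincide, and then verifying that $\sim$ is an equivalence relation satisfying (OC) (take $\alpha$ empty), (SC) (prepend the Low personalized action to $\alpha$), and (LR) (using reachability of $s$ and the fact that purging erases the extra High action, to derive a contradiction with $\NI_M(\High,\Low)$). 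None of this appears in your text. For what it is worth, read as a proof of Theorem~\ref{prop:uniqueness}, your intersection argument for the least equivalence relation satisfying (SC) and (LR) is sound and essentially parallels the paper's route via $\rstar$ and $\ustar$ (Definitions~\ref{def:R-star} and~\ref{def:U-star}, Propositions~\ref{prop:rstar-is-unwinding-for-ustar-unified-model} and~\ref{prop:ustar-is-minimal-NI-unification}) --- but that is a different statement from the one you were asked to prove.
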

%\begin{proof}
%See the Appendix.
%\end{proof}

Next we define $\rstar$ on the states of a transition network $M$. The definition goes as follows: first we relate any two states of $M'$  if one of them can be reached from the other one by a sequence of High personalized actions. Then in each step we relate the pair of states that are reached by a similar Low personalized action from any two states that are already related. Also, we enforce transitivity on the set. We continue adding related states until the relation becomes stable.
% **********
The mathematical definition of $\rstar$ is as follows:
\begin{definition}[Relation $\rstar$ for a transition network $M$]\label{def:R-star}
Given a model $M = \tuple{\States,s_0,\Agt,\Act,do,Obs,obs}$ and sets of High players \High and Low players \Low, we define the relation $\rstar \subseteq \States \times \States$ as the least fixpoint of the following function $F$, transforming relations on $\States$:
\begin{eqnarray*}
F(R) &=& R_0\ \cup\\
&&\{(t_1,t_2) \mid \exists (s_1,s_2)\in R, l\in \Low, a\in\Act.
  do(s_1, l,a) = t_1, do(s_2,l,a) = t_2\} \cup\\
&&\{(t_1,t_2) \mid \exists s\in \States. (t_1,s)\in R \& (s,t_2)\in R\},\\
\end{eqnarray*}

\vspace{-0.7cm}
\noindent
where $(s_1,s_2)\in R_{0}$ iff for some sequence of personalized actions of High players $\alpha$, either $s_1,\xrightarrow{\alpha}s_2$, or $s_2\xrightarrow{\alpha}s_1$.
\end{definition}

%\smallskip
%It is straightforward to see that $\rstar$ is an equivalence relation (for reflexivity, notice that for any $s\in\States$, $s\xrightarrow{\alpha} s$ for $\alpha=\langle\rangle$, and therefore $(s,s)\in R_0$).
%Thus, it partitions $\States$ into equivalence classes: $[\States]_{\rstar}=\set{[s]_{\rstar}\mid s\in \States}$.
%We will now show that \emph{if $M$ satisfies noninterference then $\rstar$ is the smallest unwinding relation}.
%Note that the construction of $\rstar$ is independent from the actual observations in $M$.
%Conversely, if $M$ does not satisfy noninterference then $\rstar$ indicates pairs of states that \emph{must bear the same observations for Low if we want to make the model non-interferent}.
%
\short{ %begin short
  % **********
  It can be shown that it is sufficient to unify Low's observations in states connected by $\rstar$ in order to obtain a non-interferent model. In consequence, $\rstar$ generates the minimal unification that achieves the task.
  % **********
}%end short
\extended{ %begin extended
  % **********
  It is straightforward to see that $\rstar$ is an equivalence relation (for reflexivity, notice that for any $s\in\States$, $s\xrightarrow{\alpha} s$ for $\alpha=\langle\rangle$, and therefore $(s,s)\in R_0$).
  %Thus, it partitions $\States$ into equivalence classes: $[\States]_{\rstar}=\set{[s]_{\rstar}\mid s\in \States}$.
  We will now show that \emph{if $M$ satisfies noninterference then $\rstar$ is the smallest unwinding relation}.
  %Note that the construction of $\rstar$ is independent from the actual observations in $M$.
  Conversely, if $M$ does not satisfy noninterference then $\rstar$ indicates pairs of states that \emph{must bear the same observations for Low if we want to make the model $M$ non-interferent}. We will later show that it is sufficient to unify Low's observations in states connected by $\rstar$ in order to obtain a non-interferent variant of $M$. In consequence, $\rstar$ generates the minimal unification that achieves the task.

  The following proposition shows that if $M$ satisfies the noninterference property, then $\rstar$ is a subset of any unwinding relations on the states of $M$.
  \begin{proposition}\label{prop:unwind-rstar}
  Given a model $M$ and sets of players \High and \Low, if $\uwNIL$ is an unwinding relation on the states of $M$ and relation $\rstar$ is defined as in Definition \ref{def:R-star}, then $\rstar\subseteq \uwNIL$.
  \end{proposition}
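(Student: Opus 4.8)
The plan is to exploit the fact that $\rstar$ is defined as the \emph{least} fixpoint of the operator $F$, so that by the Knaster--Tarski theorem $\rstar$ is contained in every \emph{pre-fixpoint} of $F$, i.e., in every relation $R$ with $F(R)\subseteq R$. It therefore suffices to prove that an arbitrary unwinding relation $\uwNIL$ is itself such a pre-fixpoint: once we establish $F(\uwNIL)\subseteq\uwNIL$, the desired inclusion $\rstar\subseteq\uwNIL$ follows at once. First I would record that $F$ is monotone on the lattice of relations over $\States\times\States$ ($R_0$ is a constant, and both the Low-action-closure clause and the transitivity clause are manifestly monotone in $R$), which is exactly what legitimizes the least-fixpoint definition and the Knaster--Tarski characterization $\rstar=\bigcap\{R \mid F(R)\subseteq R\}$.

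The core of the argument is then to check that each of the three components making up $F(\uwNIL)$ lands inside $\uwNIL$, using the defining conditions of an unwinding relation together with the fact that $\uwNIL$ is an equivalence relation. For the Low-action-closure clause, if $(s_1,s_2)\in\uwNIL$ and $t_i=do(s_i,l,a)$ with $l\in\Low$ and $a\in\Act$, then $t_1\uwNIL t_2$ is precisely step consistency \textbf{(SC)}. The transitivity clause is absorbed immediately because $\uwNIL$ is transitive by assumption. This leaves the constant term, for which I must verify $R_0\subseteq\uwNIL$.

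For $R_0$ the argument is a short induction. Suppose $(s_1,s_2)\in R_0$; by definition there is a sequence $\alpha$ of High personalized actions with, say, $s_1\xrightarrow{\alpha}s_2$, and I would prove $s_1\uwNIL s_2$ by induction on the length of $\alpha$. The base case $\alpha=\langle\rangle$ gives $s_1=s_2$ and is handled by reflexivity of $\uwNIL$. In the inductive step, decompose $\alpha$ as $\beta$ followed by a single High personalized action $(u,a)$ with $u\in\High$; writing $s'=exec(s_1,\beta)$, the induction hypothesis gives $s_1\uwNIL s'$, local respect \textbf{(LR)} gives $s'\uwNIL do(s',u,a)=s_2$, and transitivity of $\uwNIL$ then yields $s_1\uwNIL s_2$. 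The symmetric case $s_2\xrightarrow{\alpha}s_1$ is identical and is closed off using symmetry of $\uwNIL$. This establishes $R_0\subseteq\uwNIL$ and completes the verification that $\uwNIL$ is a pre-fixpoint of $F$, whence $\rstar\subseteq\uwNIL$.

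I expect the only mildly delicate point to be this $R_0$ step, since it is the sole place that requires an induction and the combined use of \textbf{(LR)}, reflexivity, and symmetry, rather than a one-line appeal to a single unwinding condition. It is worth noting that output consistency \textbf{(OC)} plays no role in this inclusion; it becomes relevant only afterwards, when one turns $\rstar$ into the construction of the idealized model. Thus the proof relies only on \textbf{(SC)}, \textbf{(LR)}, and the equivalence-relation structure of $\uwNIL$, which keeps the whole verification routine.
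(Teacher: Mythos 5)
Your proof is correct and takes essentially the same route as the paper's: the paper argues by induction on the stages of the fixpoint construction (base case $R_0\subseteq\uwNIL$ via \textbf{(LR)} and transitivity, inductive step via \textbf{(SC)} and transitivity), which is just the iterative rendering of your Knaster--Tarski pre-fixpoint argument. Your treatment of the $R_0$ case is in fact slightly more explicit than the paper's (which compresses the induction on the length of $\alpha$ into a one-line appeal to \textbf{(LR)} and transitivity), and your observation that \textbf{(OC)} is not needed is accurate.
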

  \begin{proof}
  See the Appendix.
  \end{proof}

  Now we show that if the model $M$, $\Low$ players have the same observations at any two states related by $\rstar$, then $M$ satisfies noninterference.
  \begin{lemma}\label{prop:rstarNI}
  In a model $M$ with sets of players \High and \Low, if for any $l\in \Low$, $s_1,s_2\in \States$ it is the case that $(s_1,s_2)\in \rstar$ implies $obs(s_1,l)=obs(s_2,l)$, then $R^*$ is an unwinding relation on the states of $M$ and therefore it holds that $NI_M(\High,\Low)$.
  \end{lemma}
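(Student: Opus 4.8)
The plan is to verify directly that, under the stated hypothesis, $\rstar$ satisfies the three defining conditions of an unwinding relation from Definition~\ref{def:unwind-nonint}, and then invoke Proposition~\ref{prop:unwind-nonint} to conclude $NI_M(\High,\Low)$. Recall first that $\rstar$ is an equivalence relation (reflexivity from the empty High action sequence, so that $(s,s)\in R_0$; symmetry from the symmetric form of $R_0$ together with the fact that the two generating clauses of $F$ preserve symmetry; and transitivity from the third clause of $F$), so only (OC), (SC), and (LR) remain to be checked.

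For (OC), observe that the condition ``$s\rstar t$ implies $[s]_\Low=[t]_\Low$'', i.e.\ $obs(s,l)=obs(t,l)$ for every $l\in\Low$, is literally the hypothesis of the lemma, so there is nothing further to prove. The key observation for the remaining two conditions is that $\rstar$, being the \emph{least fixpoint} of the monotone operator $F$, satisfies $F(\rstar)=\rstar$; hence every pair produced by a clause of $F$ from pairs already in $\rstar$ is itself in $\rstar$.

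For (SC), suppose $s\rstar t$, $u\in\Low$, and $a\in\Act$. Then $(do(s,u,a),do(t,u,a))$ is precisely a pair generated by the second clause of $F$ applied to $(s,t)\in\rstar$ (taking $l=u$), so it lies in $F(\rstar)=\rstar$, which is exactly $do(s,u,a)\rstar do(t,u,a)$. For (LR), let $u\in\High$ and $a\in\Act$. Taking the one-element High action sequence $\alpha=\langle(u,a)\rangle$ gives $s\xrightarrow{\alpha}do(s,u,a)$, whence $(s,do(s,u,a))\in R_0\subseteq F(\rstar)=\rstar$, i.e.\ $s\rstar do(s,u,a)$. With (OC), (SC), and (LR) established, $\rstar$ is an unwinding relation, and Proposition~\ref{prop:unwind-nonint} immediately yields $NI_M(\High,\Low)$.

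I do not expect a serious obstacle: the proof is essentially a matter of matching the three unwinding conditions against the three ingredients built into the definition of $\rstar$ --- $R_0$ supplies (LR), the Low-action clause supplies (SC), and the hypothesis supplies (OC). The only points requiring care are, first, to use that $\rstar$ is a genuine \emph{fixpoint} (so that $F(\rstar)=\rstar$, not merely $F(\rstar)\supseteq\rstar$), which is what licenses concluding that the newly generated pairs belong to $\rstar$ itself; and second, to confirm that $\rstar$ is symmetric, so that the equivalence-relation requirement of Definition~\ref{def:unwind-nonint} is genuinely met.
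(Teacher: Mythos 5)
Your proof is correct and follows essentially the same route as the paper's: verify that $\rstar$ is an equivalence relation, obtain (OC) from the hypothesis, and obtain (SC) and (LR) from the generating clauses of $F$ together with the fixpoint property $F(\rstar)=\rstar$, then apply Proposition~\ref{prop:unwind-nonint}. The paper's own proof is just a terser statement of the same checks, so nothing further is needed.
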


  \begin{proof}
  We prove this by showing that $\rstar$ satisfies the conditions of Definition \ref{def:unwind-nonint}: The relation $\rstar$ is an equivalence relation, condition (OC) follows from the assumption of this lemma, and conditions (SC) and (LR) follow from the definition of the relation $\rstar$. Therefore it holds that $NI_M(\High,\Low)$.
  \end{proof}

  And as the last step before introducing the unification of function $\ustar$, we show that if $M$ satisfies noninterference, then $\rstar$ is an unwinding relation on its states (and by Proposition~\ref{prop:unwind-rstar} it is in fact the smallest unwinding relation).
  \begin{proposition}\label{prop:rstar-is-unwinding-for-NI}
  In a model $M$ with sets of players \High and \Low, if $NI_M(\High,\Low)$ then $\rstar$ is an unwinding relation on states of $M$.
  \end{proposition}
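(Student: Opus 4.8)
The plan is to reduce the statement to the single condition (OC) from Definition~\ref{def:unwind-nonint}, since the remaining ingredients of an unwinding relation are already available for $\rstar$ unconditionally. Indeed, $\rstar$ has already been observed to be an equivalence relation, and its two structural conditions hold by construction of the fixpoint $F$ in Definition~\ref{def:R-star}: step consistency (SC) is exactly the closure enforced by the second clause of $F$ (if $(s_1,s_2)\in\rstar$ then $(do(s_1,l,a),do(s_2,l,a))\in F(\rstar)=\rstar$ for every $l\in\Low$ and $a\in\Act$), and local respect (LR) follows from the base relation $R_0$ by taking the one-step High sequence $\alpha=\langle(u,a)\rangle$, which gives $s\xrightarrow{\alpha}do(s,u,a)$ and hence $(s,do(s,u,a))\in R_0\subseteq\rstar$. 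These observations are precisely what Lemma~\ref{prop:rstarNI} packages, so it suffices to establish the hypothesis of that lemma, namely that $(s_1,s_2)\in\rstar$ implies $[s_1]_\Low=[s_2]_\Low$ for every $l\in\Low$.

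This last point is exactly where the assumption $NI_M(\High,\Low)$ is used, and I would establish it by squeezing $\rstar$ underneath a known unwinding relation. First, since $NI_M(\High,\Low)$ holds, Proposition~\ref{prop:unwind-nonint} guarantees the existence of some unwinding relation $\uwNIL$ on the states of $M$. Next, Proposition~\ref{prop:unwind-rstar} supplies the crucial inclusion $\rstar\subseteq\uwNIL$. Now take any pair $(s_1,s_2)\in\rstar$; by the inclusion it is also $\uwNIL$-related, and since $\uwNIL$ satisfies (OC) we conclude $[s_1]_\Low=[s_2]_\Low$. This is precisely the hypothesis required by Lemma~\ref{prop:rstarNI}, so that lemma then delivers the conclusion that $\rstar$ is itself an unwinding relation, completing the argument.

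The genuinely load-bearing step is the inclusion $\rstar\subseteq\uwNIL$ supplied by Proposition~\ref{prop:unwind-rstar}; everything else is either bookkeeping or already in place. I would therefore double-check that the minimality expressed there really applies to an \emph{arbitrary} unwinding relation obtained from Proposition~\ref{prop:unwind-nonint}, rather than to some specially constructed one, since the whole transfer of (OC) hinges on this. Conceptually, this is also the step that explains why noninterference is indispensable: without it there need be no unwinding relation to bound $\rstar$ from above, and then $\rstar$ may relate states carrying different Low observations (as happens in the interfering models $M_a$ and $M_b$), so that (OC) fails and $\rstar$ is not an unwinding relation. Thus the proof localizes the entire dependence on $NI_M$ into the transfer of (OC) across the inclusion.
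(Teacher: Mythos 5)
Your proposal is correct and follows exactly the paper's own argument: invoke Proposition~\ref{prop:unwind-nonint} to obtain an unwinding relation $\uwNIL$ from $NI_M(\High,\Low)$, use Proposition~\ref{prop:unwind-rstar} to get $\rstar\subseteq\uwNIL$, transfer (OC) across this inclusion, and conclude via Lemma~\ref{prop:rstarNI}. Your additional observation that Proposition~\ref{prop:unwind-rstar} applies to an \emph{arbitrary} unwinding relation (which it does, by its statement) is a sound sanity check, but the proof itself coincides with the paper's.
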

  \begin{proof}
  See the Appendix.
  \end{proof}
  % **********
}%end extended
Now, by using relation $\rstar$, we define the unification of observations $\ustar$ that will provide the noninterferent idealized variant of $M$.

\begin{definition}[Unification for noninterference $\ustar$]\label{def:U-star}
We define the unification of observations $\ustar \subseteq Obs\times Obs$ as follows.
For any $o_1,o_2\in Obs$, we have $(o_1,o_2)\in \ustar$ iff there exist $s_1,s_2,t_1,t_2\in \States$ and $l\in\Low$ such that:
\short{
  % **********
  \\
  (a) $obs(s_1,l)=o_1$,\
  (b) $obs(s_2,l)=o_2$,\
  (c) $(s_1,t_1)\in \rstar$,\
  (d) $(s_2,t_2)\in \rstar$, and\
  (e) $obs(t_1,l)=obs(t_2,l)$.
  % **********
}%end short
\extended{
  % **********
  \begin{description}
  \item{(a)} $obs(s_1,l)=o_1$,
  \item{(b)} $obs(s_2,l)=o_2$,
  \item{(c)} $(s_1,t_1)\in \rstar$,
  \item{(d)} $(s_2,t_2)\in \rstar$, and
  \item{(e)} $obs(t_1,l)=obs(t_2,l)$.
  \end{description}
  % **********
}%end extended
\end{definition}

%Relation $\ustar$ defined this way is in fact (as we show in the following) the minimal equivalence relation satisfying the condition that if $\rstar(s_1,s_2)$ then $\ustar(obs(s_1),obs(s_2))$.

It then holds that $\ustar(M)$ satisfies the noninterference property (Proposition~\ref{prop:rstar-is-unwinding-for-ustar-unified-model}) and no refinement of $\ustar$ achieves that (Proposition~\ref{prop:ustar-is-minimal-NI-unification}).
%\short{
%\footnote{
%  Again, we omit the proofs in this version of the paper due to lack of space. }
%}% end-short
%
\extended{%begin extended
  % **********
  The following lemma states that if two states are related by $\rstar$, then their observations are unified by $\ustar$.
  \begin{lemma}\label{lemma:rstar-implies-ustar}
  In a model $M$, for any $s_1,s_2\in\States$ and $l\in\Low$, if $(s_1,s_2)\in \rstar$ then $(obs(s_1,l),obs(s_2,l))\in \ustar$.
  \end{lemma}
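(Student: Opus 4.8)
The plan is to unfold Definition~\ref{def:U-star} directly and exhibit explicit witnesses for membership in $\ustar$, using only the reflexivity of $\rstar$.

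Fix states $s_1,s_2\in\States$ with $(s_1,s_2)\in\rstar$ and an agent $l\in\Low$, and write $o_1 = obs(s_1,l)$ and $o_2 = obs(s_2,l)$. The goal is to show $(o_1,o_2)\in\ustar$, so by Definition~\ref{def:U-star} it suffices to supply four states and an agent satisfying conditions (a)--(e). The key idea is to collapse the two ``target'' witnesses into a single state, which makes condition (e) hold for free: in the notation of Definition~\ref{def:U-star}, I instantiate its existential witnesses by taking the pair to be the given states $s_1,s_2$, setting $t_1 = t_2 = s_2$, and choosing the agent to be $l$.

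With this choice, conditions (a) and (b) hold by the very definitions of $o_1$ and $o_2$. Condition (c) is $(s_1,t_1) = (s_1,s_2)\in\rstar$, which is precisely the hypothesis. Condition (d) is $(s_2,t_2) = (s_2,s_2)\in\rstar$, which holds because $\rstar$ is reflexive --- recall that $(s,s)\in R_0\subseteq\rstar$ via the empty sequence $\alpha=\langle\rangle$ of High actions. Condition (e) is $obs(t_1,l) = obs(t_2,l)$, i.e.\ $obs(s_2,l) = obs(s_2,l)$, which is trivially true. Hence all five conditions are met and $(o_1,o_2)\in\ustar$, as required.

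The argument is essentially immediate, so there is no real obstacle beyond being explicit about the instantiation; the only nontrivial ingredient is reflexivity of $\rstar$, already recorded when $\rstar$ was shown to be an equivalence relation. The single point worth stating carefully is that the existential quantifiers in Definition~\ref{def:U-star} range freely over all states, so we are permitted to reuse $s_2$ in both a source role and a target role --- and it is exactly this freedom that lets condition (e) discharge automatically.
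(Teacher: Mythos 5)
Your proof is correct and follows exactly the paper's own argument: the paper likewise instantiates the witnesses of Definition~\ref{def:U-star} with $t_1 := s_2$ and $t_2 := s_2$, discharging (a),(b) by definition, (c) by hypothesis, (d) by reflexivity of $\rstar$ (via the empty High-action sequence), and (e) trivially since $t_1 = t_2$. Nothing is missing.
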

  \begin{proof}
  See the Appendix.
  \end{proof}

  As the next step, we show that $\ustar(M)$ satisfies the noninterference property.

  % **********
}%end extended
\begin{proposition}\label{prop:rstar-is-unwinding-for-ustar-unified-model}
Given a model $M$, and $\ustar(M) = \tuple{\States,s_0,\Agt,\Act,do,Obs^*,obs^*}$ defined as in Definition \ref{def:U-star} on $M$, it holds that $NI_{\ustar(M)}(\High,\Low)$.
%
%it holds that $\rstar$ is an unwinding relation for $\ustar(M')$ and therefore $NI_{\ustar(M')}(\High,\Low)$.
\end{proposition}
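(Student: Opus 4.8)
The plan is to reuse the machinery already built for $\rstar$ and to exploit the fact that applying a unification changes \emph{only} the observation labels, leaving the transition structure untouched. The first and central observation I would record is that the relation $\rstar$ of Definition~\ref{def:R-star} depends solely on $\States$, $\Agt$, $\Act$, $do$ and the partition of $\Agt$ into $\High$ and $\Low$; since $\ustar(M) = \tuple{\States,s_0,\Agt,\Act,do,Obs^*,obs^*}$ differs from $M$ only in $Obs$ and $obs$, the relation $\rstar$ computed on $\ustar(M)$ is literally the same subset of $\States\times\States$ as the one computed on $M$. This lets me transport every structural fact about $\rstar$ from $M$ to $\ustar(M)$ for free, and in particular it lets me read Lemma~\ref{prop:rstarNI} (stated for a generic model) as a statement about $\ustar(M)$.

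Next I would discharge the hypothesis of Lemma~\ref{prop:rstarNI} for the model $\ustar(M)$, namely that $(s_1,s_2)\in\rstar$ implies $obs^*(s_1,l)=obs^*(s_2,l)$ for every $l\in\Low$. By construction of the unified model, $obs^*(s_i,l)=[obs(s_i,l)]_{\ustar}$, so this equality is equivalent to $(obs(s_1,l),obs(s_2,l))\in\ustar$. But this is precisely the content of Lemma~\ref{lemma:rstar-implies-ustar}, which already shows that $\rstar$-related states carry $\ustar$-unifiable \Low-observations in $M$. Hence the two equivalence classes coincide and the hypothesis holds in $\ustar(M)$.

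With the hypothesis in hand, I would invoke Lemma~\ref{prop:rstarNI} applied to $\ustar(M)$: the relation $\rstar$ is an equivalence relation, condition (OC) is exactly the observation equality just verified, and conditions (SC) and (LR) of Definition~\ref{def:unwind-nonint} follow directly from the closure clauses built into the fixpoint $F$ of Definition~\ref{def:R-star} (these clauses mention only $do$ and $R_0$, so they are insensitive to the relabelling). Therefore $\rstar$ is an unwinding relation on the states of $\ustar(M)$, and by Proposition~\ref{prop:unwind-nonint} the existence of such a relation yields $NI_{\ustar(M)}(\High,\Low)$, as required.

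Given the two auxiliary lemmas, the argument is essentially bookkeeping; the one point I would present most carefully is the invariance of $\rstar$ under unification. It is conceptually immediate, but it is the linchpin that allows Lemma~\ref{lemma:rstar-implies-ustar} and Lemma~\ref{prop:rstarNI} to be chained together without recomputing $\rstar$ for the new model, and I would make explicit that the structural conditions (SC) and (LR) are preserved precisely because they never refer to the observation function.
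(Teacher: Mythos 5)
Your proof is correct and follows essentially the same route as the paper's: both hinge on the observation that $\rstar$ is unchanged by the unification (since it never consults the observation function), then use Lemma~\ref{lemma:rstar-implies-ustar} to discharge the hypothesis of Lemma~\ref{prop:rstarNI} for $\ustar(M)$ and conclude via the unwinding characterization. No gaps.
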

\extended{%begin extended
  % **********
  \begin{proof}
  See the Appendix.
  \end{proof}

  \begin{figure*}[h]\centering
  \hspace*{-30pt}
  \begin{tikzpicture}[thick,scale=0.6, every node/.style={transform shape}]
  \input{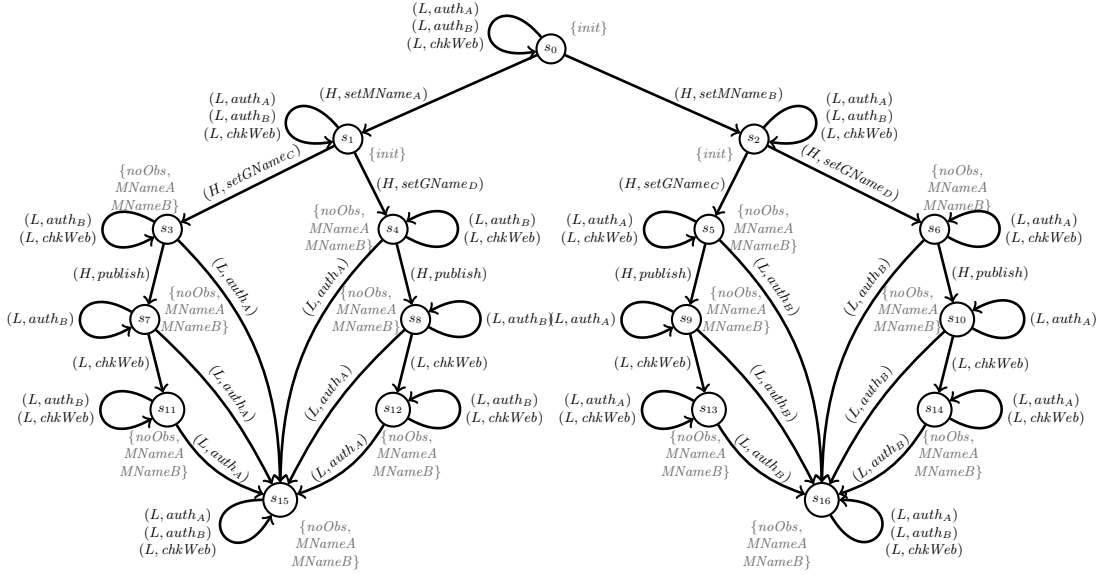}
  \end{tikzpicture}
  \caption{The noninterfering idealized variant of the banking model $M_b$}
  \label{fig:banking-idealized}
  \end{figure*}

  As the last step before proving Theorem~\ref{prop:uniqueness}, we show that $\ustar$ is the minimal unification that makes the model $M$ noninterfering.
  % **********
}%end extended
\begin{proposition}\label{prop:ustar-is-minimal-NI-unification}
Given a model $M$, and sets of players \High and \Low, for any unification of observations $U$ where $U(M) = \tuple{\States,s_0,\Agt,\Act,do,Obs',obs'}$, if $NI_{U(M)}(\High,\Low)$ then $\ustar\subseteq U$.
\end{proposition}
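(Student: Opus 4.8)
The plan is to show that every pair of observation labels identified by $\ustar$ is already identified by $U$. Fix a unification $U$ with $NI_{U(M)}(\High,\Low)$ and take an arbitrary $(o_1,o_2)\in\ustar$. Unfolding Definition~\ref{def:U-star} produces states $s_1,s_2,t_1,t_2$ and an agent $l\in\Low$ with $obs(s_1,l)=o_1$, $obs(s_2,l)=o_2$, $(s_1,t_1)\in\rstar$, $(s_2,t_2)\in\rstar$, and $obs(t_1,l)=obs(t_2,l)$. The goal is then to conclude $(o_1,o_2)\in U$.

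The crucial preliminary observation is that $\rstar$ depends only on $do$ and on the split of personalized actions into High and Low ones (see Definition~\ref{def:R-star}); it never refers to the observation function. Since $M$ and $U(M)$ share $\States,s_0,\Agt,\Act,do$, the relation $\rstar$ computed for $M$ is \emph{literally the same} as the one computed for $U(M)$. Now I would invoke the hypothesis $NI_{U(M)}(\High,\Low)$: by Proposition~\ref{prop:rstar-is-unwinding-for-NI} applied to the model $U(M)$, this $\rstar$ is an unwinding relation on the states of $U(M)$, so it satisfies output consistency (OC). (Alternatively one can extract \emph{some} unwinding relation on $U(M)$ from Proposition~\ref{prop:unwind-nonint}, use $\rstar\subseteq{\uwNIL}$ from Proposition~\ref{prop:unwind-rstar}, and read off (OC) from the larger relation.) Consequently, for every $(s,t)\in\rstar$ and every $l\in\Low$ we obtain $obs'(s,l)=obs'(t,l)$, where $obs'(\cdot)=[obs(\cdot)]_U$. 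By the definition of the quotient observation, this says precisely that $(obs(s,l),obs(t,l))\in U$.

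With this lemma in hand, the rest is pure equivalence-relation bookkeeping. Applying it to $(s_1,t_1)\in\rstar$ gives $(o_1,obs(t_1,l))\in U$, and applying it to $(s_2,t_2)\in\rstar$ gives $(o_2,obs(t_2,l))\in U$. Since $obs(t_1,l)=obs(t_2,l)$, the two pairs share a common element, and because $U$ is an equivalence relation, symmetry and transitivity yield $(o_1,o_2)\in U$. As $(o_1,o_2)$ was an arbitrary element of $\ustar$, this establishes $\ustar\subseteq U$.

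The one genuine subtlety — and hence the step I would treat most carefully — is the transfer of the unwinding characterization from $M$ to $U(M)$: one must notice that $\rstar$ is built from the transition frame alone, so it survives unchanged under the change of observation function, and that (OC) in $U(M)$ is exactly what converts membership of a state pair in $\rstar$ into membership of the corresponding label pair in $U$. Everything after that observation is routine.
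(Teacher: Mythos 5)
Your proof is correct and follows essentially the same route as the paper's: unfold the definition of $\ustar$, note that $\rstar$ coincides with $R^*_{U(M)}$ because it is built from the transition frame alone, invoke Proposition~\ref{prop:rstar-is-unwinding-for-NI} on $U(M)$ to obtain output consistency, and finish with transitivity of $U$. The subtlety you flag (the transfer of $\rstar$ to $U(M)$) is exactly the step the paper also makes explicit.
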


%\begin{proof}
%See the Appendix.
%\end{proof}

\smallskip
We can now complete the proof of Theorem~\ref{prop:uniqueness}.

\begin{proof}[of Theorem~\ref{prop:uniqueness}]
We want to prove that, given a model $M$, set of players \High and \Low, and any unification of observations $\mathcal{U}$, if $\mathcal{U}(M)$ is a noninterfering idealized variant of $M$, then $\mathcal{U}=\ustar$.
Assume that $\mathcal{U}(M)$ is a noninterfering idealized variant of $M$. By property (i) of Definition \ref{def:NI-idealized} and Proposition \ref{prop:ustar-is-minimal-NI-unification} we infer that $\ustar\subseteq \mathcal{U}$. Also, by Proposition \ref{prop:rstar-is-unwinding-for-ustar-unified-model}, we have that $NI_{\ustar(M)}(\High,\Low)$. Therefore by property (ii) of Definition \ref{def:NI-idealized} it holds that $\mathcal{U}=\ustar$.
\end{proof}

%As $\ustar$ is uniquely defined for a given model $M$, and any noninterfering idealized variant of $M$ is equal to $\ustar$, it entails the uniqueness of the noninterfering idealized  variant of a given model $M$. Moreover,

\extended{
  % **********
  From now on, we assume that $\Ideal(M)$ refers to the noninterfering idealized variant of $M$.
  % **********
}%end extended

\begin{example}\label{ex:unification}
Consider models $M_a, M_b$ in Figure~\ref{fig:banking2a} and Figure~\ref{fig:banking2a}. We recall that both models are {not} noninterferent.
In the noninterferent idealized variant of $M_a$, observations $\obsL{noObs}$, $\obsL{MnameC}$, and $\obsL{MNameD}$ of \Low are unified and replaced by the equivalence class $\set{\obsL{noObs},\obsL{MNameD},\obsL{MNameD}}$.
The idealized variant of $M_b$ is constructed analogously by unification of $\obsL{noObs}$, $\obsL{MnameA}$, and $\obsL{MNameB}$.
\extended{Figure~\ref{fig:banking-idealized} shows the idealized variant $Ideal(M_b)$ of $M_b$.}
Clearly, \Low has no surely winning strategy to guarantee $\overline{\Gamma} = \Gamma_\reachst$ for $\reachst = \set{s_{15},s_{16},\sherr}$ in both $\Ideal(M_a)$ and $\Ideal(M_b)$.

Recall from Example \ref{exmpl:effective-secure} that \Low has no winning strategy for $\overline{\Gamma}$ in $M_a$, but she has one in $M_b$. So, $M_a \simeq_{\Low,\Gamma} \Ideal(M_a)$, but $M_b \not\simeq_{\Low,\Gamma} \Ideal(M_b)$. Thus, $M_a$ is effectively information-secure for $(\Low,\Gamma)$, but $M_b$ is not.
\qed
\end{example}

It is important to notice that noninterferent variants are indeed idealizations:

\begin{proposition}
For every $M$, \Low, and $\Gamma$, we have that $M \preceq_{\Low,\Gamma} \Ideal(M)$.
\end{proposition}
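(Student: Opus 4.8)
The plan is to unfold the definition of $\preceq_{\Low,\Gamma}$ and argue by contraposition. By Definition~\ref{def:compeffsec}, the claim $M \preceq_{\Low,\Gamma} \Ideal(M)$ means precisely that $ES(M,\Low,\Gamma)$ implies $ES(\Ideal(M),\Low,\Gamma)$. Equivalently, it suffices to show that whenever $\Low$ possess a surely winning strategy for $\overline{\Gamma}$ in the idealized model $\Ideal(M)=\ustar(M)$, they also possess one in $M$; this is exactly $\neg ES(\Ideal(M),\Low,\Gamma)\Rightarrow\neg ES(M,\Low,\Gamma)$.

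The crucial structural observation is that $\Ideal(M)$ and $M$ share the same action-transition frame: by the construction of $\ustar(M)$, the components $\States, s_0, \Agt, \Act, do$ are identical, and only the observation labels of $\Low$ are coarsened (merged according to $\ustar$). Consequently the two models have the same tree unfolding, $T(\Ideal(M))=T(M)$, the same set of nodes, and the same set of paths, so $\Gamma$ is indeed a goal in both. Since the outcome $out_M(\sigma)$ of a fixed strategy $\sigma$ is obtained purely by trimming $T(M)$ according to the action sets that $\sigma$ prescribes at each node --- a procedure referring only to the frame, never to the observations --- I get $out_{\Ideal(M)}(\sigma)=out_M(\sigma)$, and hence $paths_{\Ideal(M)}(\sigma)=paths_M(\sigma)$, for every $\sigma$ regarded as a function on nodes.

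Next I would establish the inclusion $\Sigma_\Low^\recl(\Ideal(M)) \subseteq \Sigma_\Low^\recl(M)$, i.e.\ that every admissible $\Low$-strategy in the idealized model is also admissible in the original. A perfect-recall $\Low$-strategy is a function on nodes subject to the uniformity constraint $obs_\Low(h)=obs_\Low(h')\Rightarrow\sigma(h)=\sigma(h')$. Because $\ustar$ only merges observation labels, the accumulated-observation relation in $\Ideal(M)$ is \emph{coarser} than in $M$: if two histories carry equal $\Low$-observation sequences in $M$, then applying the label-merging map componentwise (and re-collapsing stutters) keeps them equal in $\Ideal(M)$. Thus equality of $\Low$-observations in $M$ entails equality in $\Ideal(M)$, so a strategy satisfying the \emph{stronger} uniformity constraint of $\Ideal(M)$ automatically satisfies the \emph{weaker} one of $M$. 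With the two facts combined, the argument closes: given a winning $\sigma\in\Sigma_\Low^\recl(\Ideal(M))$ with $paths_{\Ideal(M)}(\sigma)\subseteq\overline{\Gamma}$, the same $\sigma$ lies in $\Sigma_\Low^\recl(M)$ and satisfies $paths_M(\sigma)=paths_{\Ideal(M)}(\sigma)\subseteq\overline{\Gamma}$, witnessing $\neg ES(M,\Low,\Gamma)$, as required.

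The one delicate point --- the step I would check most carefully --- is the behaviour of the $\Low$-observation sequences under the stuttering-removal convention when labels are merged. Merging distinct consecutive labels into one class can introduce extra stutters and thereby shorten a sequence, which is exactly why $\Ideal(M)$ can only lose, never gain, distinguishing power for $\Low$; I must verify that this never separates two histories that $M$ identified, only the reverse. Once this monotonicity of the history-indistinguishability relation is pinned down, the remainder is bookkeeping, and the proposition is simply the instance of the general observation-refinement principle specialized to the case where the coarser model is the noninterferent idealization $\Ideal(M)$.
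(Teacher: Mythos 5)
Your proof is correct and follows essentially the same route as the paper's: both argue by contraposition that $\Ideal(M)$ only coarsens $\Low$'s observations, so every $\Low$-strategy in $\Ideal(M)$ is also a $\Low$-strategy in $M$ with the same outcome, and hence a winning strategy for $\overline{\Gamma}$ in $\Ideal(M)$ yields one in $M$. Your extra care about the stuttering-removal convention for accumulated observations under label merging is a detail the paper's shorter proof glosses over, but it does not change the argument.
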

\begin{proof}
Note that because $M$ and $\Ideal(M)$ differ only in their observation functions. Also we have that for any pair of states $s_1,s_2\in\States$, if $[s_1]_\Low^M = [s_2]_\Low^M$ then $[s_1]_\Low^{\Ideal(M)} = [s_2]_\Low^{\Ideal(M)}$. Therefore all the strategies of \Low in $\Ideal(M)$ are also \Low's strategies in $M$. Thus for any for any goal $\Gamma \subseteq paths(M)$, if \Low have a surely winning strategy to enforce $\overline{\Gamma}$ in $\Ideal(M)$ then they also have a surely winning strategy for $\overline{\Gamma}$ in $M$, \qed.
\end{proof}

\extended{ %begin extended
  % **********
  \medskip
  Finally, note that the concept of noninterference in our construction of effective security can be in principle replaced by an arbitrary constraint of information leakage.
  The same reasoning scheme could be applied to noninference, nondeducibility, strategic noninterference, and so on. The pattern does not change: given a ``classical'' property $\mathcal{P}$ of information security, we define the idealized variant of $M$ through the minimal unification $U$ such that that $U(M)$ satisfies $\mathcal{P}$. Then, $M$ is effectively secure in the context of property $\mathcal{P}$ iff it is strategically equivalent to $U(M)$.

  We leave the investigation of which information security properties have unique minimal unifications for future work.
  % **********
}%end extended

\section{Extending the Results to a Broader Class of Models}\label{sec:non-total-models}

As mentioned before, the models of Goguen and Meseguer are ``total on input,'' i.e., each action label is available to every user at every state. This makes modeling actual systems very cumbersome.
\extended{
  % **********
  We have seen that in the previous examples where spurious states had to be added to the analysis to allow for some synchronization between actions of different agents.
  % **********
}%end extended
In this section, we consider a broader class of models, and show how our results carry over to the more expressive setting. That is, we consider \emph{partial transition networks (PTS)} $M = \tuple{\States,s_0,\Agt,\Act,Obs,obs,do}$ which are defined as in Section~\ref{sec:models}, except that the transition function $do : \States \times \Agt \times \Act \fpart \States$  can be a partial function. By $do(s,u,a)=\undef$ we denote that action $a$ is unavailable to user $u$ in state $s$; additionally, we define $\act(s,u)=\set{a\in\Act \mid do(s,u,a)\neq\undef}$ as the set of actions available to $u$ in $s$.
Moreover, we assume that players are aware of their available actions, and hence can distinguish states with different repertoires of choices -- formally, for any $u\in\Agt, s_1, s_2 \in\States$, if $obs(s_1,u)=obs(s_2,u)$ then $\act(s_1,u)=\act(s_2,u)$.

We begin by a suitable update of the definition of noninterference:

\begin{definition}[Noninterference for partial transition networks]\label{def:noninterference-nontotal}
Given a PTS $M$ and sets of agents $\High,\Low$, such that $\High\cup\Low=\Agt, \High\cap\Low=\emptyset$, we say that \emph{$\High$ is non-interfering with $\Low$} iff for all $\alpha\in(\Agt\times\Act)^*$ and all $\low\in \Low$, if $exec(\alpha)\neq \undef$ then $[exec(\alpha)]_{\low} = [exec(\Purge_\High(\alpha))]_{\low}$. We denote the property also by $\NI_M(\High,\Low)$, thus slightly overloading the notation.
%\footnote{  The original definition of noninterference from~\cite{goguen1982security} omits the condition ``if $exec(\alpha)\neq \undef$''. Clearly, both definitions agree in the class of models considered in~\cite{goguen1982security}, i.e., models that are total on input. This is because, in those models, all sequences of personalized actions can be executed from any state in the system. }
\end{definition}
Note that Definition~\ref{def:noninterference} is a special case of Definition~\ref{def:noninterference-nontotal}.
We now define the noninterferent idealized variant based on the \emph{total extension} of a PTS.

\begin{definition}[U-total extension]
Given a PTS $M = \tuple{\States,s_0,\Agt,\Act,Obs,obs,do}$ and a subset of users $U\subseteq\Agt$, we define the $U$-total variant of $M$ as $total_{U}(M)=\tuple{\States,s_0,\Agt,\Act,Obs,obs,do'}$ where the transition function $do'(.)$ is defined as follows: for every $s\in\States$, $v\in\Agt$ and $a\in\Act$, $do'(s,v,a)=s$ if for some $u\in U$ we have $v=u$ and $do(s,u,a)=\undef$, otherwise $do'(s,v,a)=do(s,v,a)$.
\end{definition}

\begin{definition}[Noninterferent idealized model for PTN]\label{def:NI-idealized-nontotal}
Given a partial transition network $M$ and a set of ``low'' players \Low, we define the \emph{noninterferent idealized variant of $M$} as $\mathcal{U}(total_L(M))$ such that:
\begin{enumerate}
\item[(i)] $\NI_{\mathcal{U}(total_L(M))}(\High,\Low)$, and
\item[(ii)] for every $\mathcal{U}' \subsetneq \mathcal{U}$ it is not the case that $\NI_{\mathcal{U}'(total_L(M))}(\High,\Low)$.
\end{enumerate}
%
%We will denote $\Ideal^{\NIblind}_L(M)=\set{M'\in tEQ(M)\mid \NI_{M'}(\Agt\setminus\Low,\Low,S)}$.
\end{definition}

\extended{The uniqueness theorem is then stated similar to Theorem~\ref{prop:uniqueness}:}

\begin{theorem}\label{prop:uniqueness-nontotal}
For every partial transition network $M$, there is always a unique unification $\mathcal{U}$ satisfying properties (i) and (ii) from Definition~\ref{def:NI-idealized-nontotal}.
\end{theorem}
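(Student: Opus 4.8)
The plan is to reduce the partial case to the total case already settled in Theorem~\ref{prop:uniqueness} by means of a double totalisation. Write $M' = total_L(M)$; this is exactly the network on which Definition~\ref{def:NI-idealized-nontotal} bases the idealised variant, and it is total on Low input although it may still be partial on High actions. Now totalise the remaining (High) players as well, obtaining $M'' = total_\Agt(M')$, which is a genuine transition network in the sense of Section~\ref{sec:models}. Totalisation alters only the transition function and leaves $\States$, $Obs$ and $obs$ untouched, so every unification $\mathcal{U}$ acts on $M'$ and $M''$ in precisely the same way; consequently, for each $\mathcal{U}$ the models $\mathcal{U}(M')$ and $\mathcal{U}(M'')$ share their states, observations and Low transitions, and differ only in the High self-loops that $total_\Agt$ inserts wherever a High action was undefined.

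The crux of the argument is to show that these added self-loops do not disturb noninterference: for every unification $\mathcal{U}$, we have $\NI_{\mathcal{U}(M')}(\High,\Low)$ iff $\NI_{\mathcal{U}(M'')}(\High,\Low)$. The direction from $M''$ to $M'$ is immediate, since any $\alpha$ executable in $M'$ executes identically in $M''$, and the purge $\Purge_\High(\alpha)$ is Low-only and hence yields the same state in both (Low actions are total). For the converse, take an arbitrary $\alpha$ (always executable in the total $M''$) and process it left to right, deleting each High action that $M''$ performs as a self-loop, i.e.\ each High action that is undefined in $M'$ at the state currently reached. Because such steps are state-preserving, the resulting $\alpha^\dagger$ satisfies $exec_{M''}(\alpha^\dagger)=exec_{M''}(\alpha)$; moreover the High actions retained in $\alpha^\dagger$ are exactly those defined in $M'$ at their point of execution, so $\alpha^\dagger$ is executable in $M'$ with the same outcome, and deleting only High self-loops leaves the Low subsequence intact, i.e.\ $\Purge_\High(\alpha^\dagger)=\Purge_\High(\alpha)$. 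Chaining $[exec(\alpha)]_\low = [exec(\alpha^\dagger)]_\low$ with $\NI_{\mathcal{U}(M')}$ and then undoing the purge equalities gives $[exec(\alpha)]_\low = [exec(\Purge_\High(\alpha))]_\low$ in $M''$, as required.

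With this equivalence in hand the uniqueness transfers directly. Applying Theorem~\ref{prop:uniqueness} to the total network $M''$ yields a unique unification $\mathcal{U}^{*}$ (the analogue of $\ustar$, built from relation $\rstar$ computed on $M''$) that makes $M''$ noninterfering and is minimal with this property. By the equivalence just established, the noninterfering unifications of $M'=total_L(M)$ coincide with those of $M''$, so $\mathcal{U}^{*}$ is also the unique unification satisfying conditions (i) and (ii) of Definition~\ref{def:NI-idealized-nontotal}, and the noninterferent idealised variant of $M$ is therefore well defined and unique. The main obstacle is the equivalence proved in the second paragraph: since whether a given High action behaves as a self-loop in $M''$ depends on the state reached, the reduction of an arbitrary $M''$-sequence to an $M'$-executable one must be carried out \emph{dynamically} along the run, and one must check carefully that every deleted step is genuinely state-preserving and that no Low action — hence nothing surviving the purge — is ever removed.
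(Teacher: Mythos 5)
Your proof is correct, but it takes a genuinely different route from the paper's. The paper stays inside the partial setting: it reformulates unwinding relations for partial transition networks by guarding the (SC) and (LR) conditions with action availability (Definition~\ref{def:unwind-nonint-nontotal}), re-proves the characterization of \NItxt via unwinding (Proposition~\ref{prop:unwind-nonint-nontotal}), and then reruns the $R^*$/$\mathcal{U}^*$ fixpoint construction on $total_\Low(M)$. You instead totalize the remaining High actions as well, obtaining a genuine Goguen--Meseguer network $M''=total_{\Agt}(total_\Low(M))$, prove that the inserted High self-loops are invisible to noninterference (via the dynamic left-to-right deletion of the High steps that are undefined in $total_\Low(M)$ at the state currently reached), and then invoke Theorem~\ref{prop:uniqueness} as a black box. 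Your bridge lemma is sound: the deletion argument does need the inductive bookkeeping you flag at the end (that the processed prefix reaches the same state in $M''$ and in $total_\Low(M)$, so that ``defined at the point of execution'' is well specified), and it is worth noting that the purged sequence $\Purge_\High(\alpha)$ is always executable because $total_\Low(M)$ is total on Low input --- without Low-totality the PTN notion of noninterference would not reduce so cleanly. What each approach buys: the paper's adaptation of unwinding yields a reusable structural characterization of \NItxt for partial networks and computes $R^*$ directly on the partial model, whereas your reduction is more economical (no new characterization to prove) and makes explicit that the two constructions agree, since the added self-loops contribute nothing new to $R_0$ beyond reflexivity, so $R^*$ computed on $M''$ coincides with $R^*_{total_\Low(M)}$.
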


The proof is similar to the proof of Theorem~\ref{prop:uniqueness}, with the difference that we use $R^*_{total_L(M)}$ instead of $\rstar$ for constructing the idealized variant. However, as we use the concept of unwiding relation as the basis for using the $R^*$ relation for constructing the idealized variant, we first need to modify the definition of the unwinding relation in Definition~\ref{def:unwind-nonint} and its corresponding proposition, Proposition~\ref{prop:unwind-nonint} to adapt them to the new model:

\begin{definition}[Unwinding for Noninterference in PTN]\label{def:unwind-nonint-nontotal}
\extended{Let $M$ be a transition network, $\High$ a set of High agents, and $\Low$ a set of Low agents. Then,}
$\uwNIL \subseteq \States \times \States$ is an \emph{unwinding relation} iff it is an equivalence relation satisfying the conditions of \emph{output consistency (OC)}, \emph{step consistency (SC)}, and \emph{local respect (LR)}. That is, for all states $s,t \in \States$:
\begin{description}
\item[(OC)] If $s\uwNIL t$ then $[s]_\Low=[t]_\Low$;

\item[(SC)] If $s\uwNIL t$, $u\in \Low$, and $a\in \Act$ then $a\in act(s,u)$ implies $do(s,u,a)\uwNIL do(t,u,a)$;

\item[(LR)] If $u\in \High \textrm{ and } a\in \Act$ then $a\in act(s,u)$ implies $s \uwNIL do(s,u,a)$.
\end{description}
\end{definition}

\begin{proposition}\label{prop:unwind-nonint-nontotal}
%In a transition network $M$, with set of High agents $\High$ and a set of Low agents $\Low$, we have
$NI_M(\High,\Low)$ iff there exist an unwinding relation $\uwNIL$ on the states of $M$ that satisfies (OC), (SC) and (LR).
\end{proposition}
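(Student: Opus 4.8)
The statement is the partial-transition-network analogue of the classical unwinding theorem (Proposition~\ref{prop:unwind-nonint}), so the plan is to adapt the standard soundness-and-completeness argument, the only new ingredients being the guards $a \in act(s,u)$ in (SC) and (LR) and the definedness caveat in Definition~\ref{def:noninterference-nontotal}. Throughout, the awareness assumption --- that $obs(s_1,u)=obs(s_2,u)$ forces $act(s_1,u)=act(s_2,u)$ --- is what lets the partial transitions be handled, because it converts the observational agreement supplied by (OC) into agreement on which \Low-actions are available.

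For the direction from an unwinding relation to noninterference ($\Leftarrow$), I would prove by induction on the length of $\alpha$ the invariant: if $exec(\alpha)\neq\undef$ then $exec(\Purge_\High(\alpha))\neq\undef$ and $exec(\alpha)\uwNIL exec(\Purge_\High(\alpha))$. The base case is reflexivity at $s_0$. For the step $\alpha=\beta\cdot(u,a)$, write $s=exec(\beta)$ and $t=exec(\Purge_\High(\beta))$, so $s\uwNIL t$ by the hypothesis. If $u\in\High$, then $\Purge_\High(\alpha)=\Purge_\High(\beta)$, and since $exec(\alpha)=do(s,u,a)\neq\undef$ we have $a\in act(s,u)$, so (LR) gives $s\uwNIL exec(\alpha)$, whence $exec(\alpha)\uwNIL t$ by symmetry and transitivity. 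If $u\in\Low$, then $a\in act(s,u)$; as (OC) applied to $s\uwNIL t$ yields $[s]_\Low=[t]_\Low$, the awareness assumption gives $a\in act(t,u)$, so $exec(\Purge_\High(\alpha))=do(t,u,a)\neq\undef$, and (SC) gives $do(s,u,a)\uwNIL do(t,u,a)$. Applying (OC) to the resulting pair yields $[exec(\alpha)]_\low=[exec(\Purge_\High(\alpha))]_\low$ for every $\low\in\Low$, which is exactly $\NI_M(\High,\Low)$.

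For the converse ($\Rightarrow$), I would exhibit a canonical unwinding relation rather than argue abstractly. The natural candidate is the least equivalence relation containing every pair $(s,do(s,u,a))$ with $u\in\High$ and $a\in act(s,u)$, closed under the \Low-step rule of (SC) --- i.e.\ the partial-network analogue of the relation $\rstar$ of Definition~\ref{def:R-star}. By construction it is an equivalence and satisfies (LR) and (SC), so the whole content of the claim reduces to showing that noninterference forces (OC), namely that related states carry equal \Low-observations. This I would establish by induction on the fixpoint stages: the generators $(s,do(s,u,a))$ inherit $[s]_\Low=[do(s,u,a)]_\Low$ from $\NI_M(\High,\Low)$ (the two reaching traces have identical \High-purges), and the \Low-step and transitivity closures preserve observational agreement, again using the awareness assumption to guarantee that the transitions demanded by (SC) are defined. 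This mirrors the inductive skeleton of Proposition~\ref{prop:rstar-is-unwinding-for-NI}, which I expect to reuse.

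The main obstacle is the interaction between the partiality of the transition function and the fact that Definition~\ref{def:noninterference-nontotal} only constrains traces with $exec(\alpha)\neq\undef$, leaving $[exec(\Purge_\High(\alpha))]_\low$ formally unconstrained when the purged trace hits $\undef$. The delicate point is therefore to show that $exec(\Purge_\High(\alpha))$ is defined whenever $exec(\alpha)$ is --- so that the observational comparison is meaningful --- and that inserting or deleting \High-actions never silently alters the availability of a later \Low-action. Both are resolved by the awareness assumption, which ties availability to observation and thus lets (OC) propagate availability along the induction; this is exactly the role it plays in the two arguments above, and it is why the proposition holds at the stated generality even though its later application specialises to $total_L(M)$, where \Low-actions are total and these definedness worries vanish.
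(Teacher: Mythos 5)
Your ``$\Leftarrow$'' direction is essentially the paper's own proof: the same induction on the length of $\alpha$, with (LR) handling \High-steps, and (OC) plus the awareness assumption ($obs(s_1,u)=obs(s_2,u)\Rightarrow act(s_1,u)=act(s_2,u)$) converting observational agreement into definedness of the purged trace before (SC) is applied. That half is fine.

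The ``$\Rightarrow$'' direction is where you diverge from the paper, and where there is a genuine gap. The paper does \emph{not} build the least closure of the \High-generated pairs; it exhibits the trace-equivalence relation $s\sim t$ iff for every $\alpha$ with $exec(s,\alpha)\neq\undef$ and $exec(t,\alpha)\neq\undef$ and every $u_L\in\Low$ one has $[exec(s,\alpha)]_{u_L}=[exec(t,\alpha)]_{u_L}$, and then verifies (OC), (SC), (LR) directly. Your plan instead takes the $\rstar$-style least fixpoint and claims that (OC) can be pushed through the fixpoint stages with the invariant ``related states carry equal \Low-observations.'' That invariant is too weak: the \Low-step closure rule does not preserve it. Knowing only $[s_1]_\Low=[s_2]_\Low$ for a pair added at stage $i$ tells you nothing about $[do(s_1,l,a)]_\Low$ versus $[do(s_2,l,a)]_\Low$ --- two states with identical observations can diverge after one \Low-action. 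What actually makes the generators and their \Low-descendants agree is that they are reached by traces with identical \High-purges, so that $\NI_M(\High,\Low)$ forces agreement on \emph{all} \Low-continuations, not just on the current observation. In other words, the invariant you need is precisely membership in the paper's $\sim$ (or ``reached by traces with equal purges''), and once you strengthen to that you have essentially reconstructed the paper's relation, so the detour through the least fixpoint buys nothing here. There is also a latent circularity in your stated plan to ``reuse the inductive skeleton of Proposition~\ref{prop:rstar-is-unwinding-for-NI}'': that proposition obtains (OC) for $\rstar$ by embedding it into an unwinding relation whose existence is supplied by the $\Rightarrow$ direction of the unwinding theorem --- the very statement you are trying to prove. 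Finally, note that the paper's verification of (LR) for $\sim$ uses reachability of $s$ to write $s=exec(\beta)$; your generator argument needs the same hypothesis, so you should state it.
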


The rest of the proof of Theorem~\ref{prop:uniqueness-nontotal} follows analogously.

\begin{example}
With PTS, the scenario from Example~\ref{ex:ni} can be modeled directly, without spurious states that ruled out illegal transitions.
Thus, our models $M_a, M_b$ for the two variants of the scenario are now exactly depicted in Figures~\ref{fig:banking2a} and~\ref{fig:banking2b}.

The noninterferent idealized variants of $M_a$ (resp.~$M_b$) is again obtained by the unification of observations $\obsL{noObs}$, $\obsL{MnameC}$, and $\obsL{MNameD}$ (resp.~$\obsL{noObs}$, $\obsL{MnameA}$, and $\obsL{MNameB}$).
Clearly, \Low has no surely winning strategy to guarantee $\overline{\Gamma} = \Gamma_\reachst$ for $\reachst = \set{s_{15},s_{16}}$ in $M_a$, $\Ideal(M_a)$, and $\Ideal(M_b)$. Moreover, he has a surely winning strategy in $M_b$.
In consequence, $M_a$ is effectively information-secure for $(\Low,\Gamma)$, but $M_b$ is not.
\qed
\end{example}

%\para{Idealization in partial transition networks.}
The noninterferent variant was indeed an idealization in simple transition networks of Goguen and Mesguer.
Is it still the case in partial transition networks?
That is, is it always the case that \Low has no more abilities in $Ideal(M)$ than in $M$? In general, no. On one hand, \Low's observational capabilities are more limited in $Ideal(M)$, and in consequence some strategies in $M$ are no longer uniform in $Ideal(M)$. On the other hand, unification $U^*$ possibly adds new transitions to $M$, that can be used by \Low in $Ideal(M)$ to construct new strategies. However, under some reasonable assumptions, $Ideal(M)$ does provide idealization, as shown in the two propositions below.
\short{The proofs are relatively simple, and we omit them due to lack of space.}

\begin{proposition}
Let $M$ be a PTN such that for every state $s$ in $M$ there is at least one player $u\notin\Low$ with $act(s,u)\neq\emptyset$. Then, for any $\Gamma$, we have that $M \preceq_{\Low,\Gamma} Ideal(M)$.
\end{proposition}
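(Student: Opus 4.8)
The plan is to prove the contrapositive: assuming \Low has a surely winning strategy for $\overline{\Gamma}$ in $\Ideal(M)=\ustar(total_L(M))$, I will construct one in $M$. This is the PTN analogue of the earlier proposition, whose proof observed that every \Low-strategy in $\Ideal(M)$ is also a \Low-strategy in $M$; the extra difficulty here is that $\Ideal(M)$ is built on $total_L(M)$ rather than on $M$ directly, so it differs from $M$ in two ways at once — the coarsening of \Low's observations by $\ustar$, and the self-loops that $total_L$ adds for the actions that were $\undef$. I would peel these off in two reductions. For the first (from $\Ideal(M)$ to $total_L(M)$): since $\ustar$ changes only the observation labels and leaves the states and the transition function untouched, the two models have the same tree of runs, and $\Gamma$ — being a set of state sequences — denotes the same path set in both; moreover $\ustar$ only merges observation classes, so accumulated \Low-observations in $total_L(M)$ refine those in $\Ideal(M)$, whence a strategy uniform in $\Ideal(M)$ is a fortiori uniform in $total_L(M)$. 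Thus a surely winning $\sigma$ for $\overline{\Gamma}$ in $\Ideal(M)$ is also surely winning in $total_L(M)$.

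For the second reduction (from $total_L(M)$ to $M$) I would reuse the very same $\sigma$, restricted to the nodes of $T(M)$. Because $M$ and $total_L(M)$ share the observation function, the restriction stays uniform and still maps to non-empty action sets, so it is a legal \Low-strategy in $M$. The key structural fact is that every transition of $M$ is also a transition of $total_L(M)$ (the latter only adds \Low self-loops in place of undefined actions), so $out_M(\sigma)$ is exactly the subtree of $out_{total_L(M)}(\sigma)$ obtained by deleting those self-loop branches. Consequently every infinite path of $out_M(\sigma)$ is also an infinite path of $out_{total_L(M)}(\sigma)$, and therefore lies in $\overline{\Gamma}$. This shows $\sigma$ is surely winning in $M$, which is what the contrapositive requires.

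The role of the hypothesis — and what I expect to be the delicate point — is to keep the deleted self-loops from creating spurious runs. Removing self-loop branches can, in principle, turn a node of the outcome into a dead end: at a state where, in $M$, only \Low can move and \Low's strategy declines all the available moves, the corresponding node has no surviving successor, and the resulting finite run escapes the path-based winning condition, making the comparison between $M$ and $\Ideal(M)$ ill-behaved. The assumption that every state $s$ admits some $u\notin\Low$ with $act(s,u)\neq\emptyset$ rules this out: a \Low-strategy never removes the moves of non-\Low players, so every node of $out_M(\sigma)$ can always be extended by such a move, $out_M(\sigma)$ has no dead ends, and every node lies on an infinite path. I would therefore spend most of the care on verifying this "no dead end" property (checking it survives the restriction of $\sigma$ to $T(M)$ and is precisely what lets the subtree path-inclusion go through), treating the two uniformity/refinement checks in the reductions as routine.
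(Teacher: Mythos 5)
Your argument is sound, but note that the paper does not actually give a proof of this proposition: it only remarks that the proofs of the two closing propositions are ``relatively simple'' and omits them, so the only thing to measure your attempt against is the earlier proposition $M \preceq_{\Low,\Gamma} Ideal(M)$ for total networks, whose proof observes that every \Low-strategy of the idealized model is also a \Low-strategy of $M$ with a smaller outcome. Your two-step factorization through $total_L(M)$ is exactly the natural extension of that argument: the unification step ($\ustar$ only coarsens observations, so uniformity transfers and the run trees coincide) is handled verbatim as before, and the passage from $total_L(M)$ to $M$ by restricting $\sigma$ to the subtree $T(M)\subseteq T(total_L(M))$, yielding an inclusion of outcome trees and hence of infinite-path sets, is correct. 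One remark on emphasis: since $paths(\cdot)$ collects only infinite runs and surely winning means $paths_M(\sigma)\subseteq\overline{\Gamma}$, deleting the added self-loop branches can only shrink the path set, so the inclusion you need survives even if dead ends were to appear; the hypothesis that every state has some $u\notin\Low$ with $act(s,u)\neq\emptyset$ serves to keep the semantics non-degenerate (no node of $out_M(\sigma)$ becomes a dead end, because a \Low-strategy never prunes the moves of non-\Low players), which is precisely the reading you give it. So concentrating most of the care on the no-dead-end verification is defensible but not where this direction of the implication could actually fail; the proof is essentially complete once the subtree inclusion is established.
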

%%
%\begin{proof}
%Straightforward from Proposition~\ref{prop:refines}.
%\end{proof}

\begin{proposition}
For any PTN $M$ and safety goal $\Gamma$, we have $M \preceq_{\Low,\Gamma} Ideal(M)$.
\end{proposition}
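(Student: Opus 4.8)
The plan is to reduce to the contrapositive and then show that the self-loops introduced by the total extension are useless for a reachability attacker. Since $\Gamma=\Gamma_\safest$ is a safety goal, its complement $\overline{\Gamma}=\Gamma_\reachst$ is the reachability goal with target $\reachst=\States\setminus\safest$. Unfolding $\preceq_{\Low,\Gamma}$, it suffices to prove that if $\Low$ has a surely winning strategy for $\Gamma_\reachst$ in $\Ideal(M)=\ustar(total_L(M))$, then $\Low$ also has one in $M$. I would first record that $\Ideal(M)$ differs from $M$ in only two ways: $total_L$ turns every previously undefined Low action into a self-loop, and $\ustar$ coarsens the observations. The coarsening only shrinks the set of uniform Low strategies, so it works in our favour; the entire difficulty (as the authors flag just above) is the self-loops, which have no counterpart in $M$.

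The key step I would establish is that a surely winning strategy $\sigma_\Low$ for $\Gamma_\reachst$ in $\Ideal(M)$ never prescribes an added self-loop along any play that has not yet reached $\reachst$. The argument is by contradiction: if at some reachable node $h$ with $last(h)=s\notin\reachst$ (and $\reachst$ not yet visited) a component $\sigma_l$ selects an action $a$ with $a\notin act_M(s,l)$, then $do_{total_L(M)}(s,l,a)=s$, so the outcome contains a stuttering branch to a node $h'$ with $last(h')=s$ and $obs_l(h')=obs_l(h)$. Uniformity of $\sigma_l$ then forces $\sigma_l(h')=\sigma_l(h)\ni a$, and iterating yields an infinite play that loops on $s\notin\reachst$ forever, contradicting sure winning. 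I expect this to be the main obstacle, and it is exactly where the safety/reachability shape of $\Gamma$ is indispensable: a self-loop never moves the system to a new state, so committing to one can only add a losing branch.

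With the claim in hand, I would transfer the strategy by setting $\sigma'_l(h)=\sigma_l(h)\cap act_M(last(h),l)$, i.e.\ discarding the added actions; by the claim this changes nothing on $\reachst$-avoiding nodes, and elsewhere the play has already won so the choice is irrelevant. Two routine checks remain. First, $\sigma'_\Low$ is a legal uniform strategy in $M$: equal original Low observations imply equal $\ustar$-observations (coarsening commutes with the stutter-removal defining $obs_l$) and, by the standing assumption, equal action repertoires, so $\sigma_\Low$ and hence $\sigma'_\Low$ assign them the same actions. Second, $\sigma'_\Low$ is surely winning: along any $\reachst$-avoiding prefix the trees $out_M(\sigma'_\Low)$ and $out_{\Ideal(M)}(\sigma_\Low)$ have identical branches, since High transitions are untouched by $total_L$ and the retained Low actions keep their targets, so a play of $\sigma'_\Low$ in $M$ that never reached $\reachst$ would be a play of $\sigma_\Low$ that never reaches $\reachst$, which is impossible. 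These last two checks are straightforward bookkeeping; the substance is the self-loop lemma of the previous paragraph.
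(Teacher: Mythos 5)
Your proof is correct, and it supplies exactly the argument that the paper gestures at but omits: the only transitions that $total_\Low(M)$ adds are Low self-loops, and a reachability attacker who ever prescribes one at a not-yet-winning node of the outcome tree generates, by uniformity under the stuttering observation rule, an infinite stuttering play that never reaches $\reachst$ --- so a surely winning $\sigma_\Low$ in $\Ideal(M)$ already stays inside $\act_M$ on all relevant nodes and transfers back to $M$ verbatim. The one point to tidy up is the definition of $\sigma'_l$ on nodes where $\reachst$ has already been visited: there $\sigma_l(h)\cap \act_M(last(h),l)$ may be empty (a strategy must return a nonempty set), so you need an explicit fallback such as $\act_M(last(h),l)$ or $\Act$, chosen as a function of $obs^M_l(h)$ alone so that uniformity is preserved even when two observation-equivalent nodes disagree on whether $\reachst$ has been visited; with that amendment the transfer argument goes through as you describe.
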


\section{Conclusions}\label{sec:conclusions}

In this paper, we introduce the novel concept of \emph{effective information security}. The idea is aimed at assessing the relevance of information leakage in a system, based on how much the leakage enables an adversary to harm the correct behavior of the system. This contrasts with the common approach to information flow security where revealing any information is seen as being intrinsically harmful. We say that two information flows are \emph{effectively equivalent} if the strategic ability of the adversary is similar in both of them. Moreover, one of them is \emph{less effectively secure} than the other one if the amount of information leaked to the adversary in it increases the damaging ability of the adversary.

In order to determine how critical the information leakage is in a given system, we compare the damaging ability of the adversary to his ability in the idealized variant of the model. We define idealized models based on noninterference, and show that the construction is well defined.
We prove this first for the deterministic, fully asynchronous transition networks of Goguen and Meseguer, and then extend the results to structures that allow for a more flexible modeling of interaction.
The construction includes an algorithm that computes the idealized variant of each model in polynomial time wrt the size of the model.

Note that the concept of noninterference in our construction of effective security can be in principle replaced by an arbitrary property of information flow.
The same reasoning scheme could be applied to noninference, nondeducibility, strategic noninterference, and so on. The pattern does not change: given a property $\mathcal{P}$, we define the idealized variant of $M$ through the minimal unification $U$ such that $U(M)$ satisfies $\mathcal{P}$. Then, $M$ is effectively information-secure in the context of property $\mathcal{P}$ iff it is strategically equivalent to $U(M)$.
We leave the investigation of which information security properties have unique minimal unifications for future work.
Moreover, we are currently working on a more refined version of effective information security based on coalitional effectivity functions\extended{~\cite{Abdou91effectivity}}, in which the strategic ability of the adversary is not only compared at the initial state of the system, but across the whole state space.

\bibliographystyle{plain}
%\bibliography{wojtek,wojtek-own,masoud}

\begin{thebibliography}{10}

\bibitem{Abdou91effectivity}
J.~Abdou and H.~Keiding.
\newblock {\em Effectivity Functions in Social Choice}.
\newblock Springer, 1991.

\bibitem{AcquistiG04privacy}
A.~Acquisti and J.~Grossklags.
\newblock Privacy attitudes and privacy behavior - losses, gains, and
  hyperbolic discounting.
\newblock In {\em Economics of Information Security}, volume~12 of {\em
  Advances in Information Security}, pages 165--178. Springer, 2004.

\bibitem{allen1991comparison}
P.G. Allen.
\newblock A comparison of non-interference and non-deducibility using {CSP}.
\newblock In {\em Proceedings of CSFW}, pages 43--54, 1991.

\bibitem{Alur99reactive}
R.~Alur and T.~A. Henzinger.
\newblock Reactive modules.
\newblock {\em Formal Methods in System Design}, 15(1):7--48, 1999.

\bibitem{Alur02ATL}
R.~Alur, T.~A. Henzinger, and O.~Kupferman.
\newblock {A}lternating-time {T}emporal {L}ogic.
\newblock {\em Journal of the ACM}, 49:672--713, 2002.

\bibitem{Anderson07incentives}
R.~Anderson, T.~Moore, S.~Nagaraja, and A.~Ozment.
\newblock Incentives and information security.
\newblock In {\em Algorithmic Game Theory}. 2007.

\bibitem{backes2003intransitive}
Michael Backes and Birgit Pfitzmann.
\newblock Intransitive non-interference for cryptographic purposes.
\newblock In {\em Proceedings of S\&P}, pages 140--152. IEEE, 2003.

\bibitem{Buchi62omega-regular}
J.R. B\"uchi.
\newblock On a decision method in restricted second order arithmetic.
\newblock In {\em Logic, Methodology and Philosophy of Science. Proc. 1960
  Intern. Congr.}, pages 1--11. Stanford University Press, 1962.

\bibitem{Buldas07evoting}
Ahto Buldas and Triinu M{\"a}gi.
\newblock Practical security analysis of e-voting systems.
\newblock In {\em Proceedings of IWSEC}, volume 4752 of {\em Lecture Notes in
  Computer Science}, pages 320--335. Springer, 2007.

\bibitem{Chaum81untreaceable}
D.~Chaum.
\newblock Untraceable electronic mail, return addresses, and digital
  pseudonyms.
\newblock {\em Communications of the ACM}, 24:84–--90, 1981.

\bibitem{Dimovski14noninterf-game}
A.S. Dimovski.
\newblock Ensuring secure non-interference of programs by game semantics.
\newblock In {\em Security and Trust Management}, pages 81--96. Springer, 2014.

\bibitem{Dodis07cryptoGT}
Y.~Dodis and T.~Rabin.
\newblock Cryptography and game theory.
\newblock In {\em Algorithmic Game Theory}, chapter~8, pages 181--208. 2007.

\bibitem{engelhardt2012intransitive}
Kai Engelhardt, Ron van~der Meyden, and Chenyi Zhang.
\newblock Intransitive noninterference in nondeterministic systems.
\newblock In {\em Proceedings of CCS}, pages 869--880. ACM, 2012.

\bibitem{Fagin95knowledge}
R.~Fagin, J.~Y. Halpern, Y.~Moses, and M.~Y. Vardi.
\newblock {\em Reasoning about Knowledge}.
\newblock MIT Press, 1995.

\bibitem{Fielder14GT-infosec}
A.~Fielder, E.~Panaousis, P.~Malacaria, C.~Hankin, and F.~Smeraldi.
\newblock Game theory meets information security management.
\newblock {\em IFIP Advances in Information and Communication Technology},
  428:15--29, 2014.

\bibitem{Fujioka92voting}
A.~Fujioka, T.~Okamoto, and K.~Ohta.
\newblock A practical secret voting scheme for large scale elections.
\newblock In {\em Proceedings of AUSCRYPT}, pages 244--–251, 1992.

\bibitem{Giacobazzi04abstract-noninterf}
R.~Giacobazzi and I.~Mastroeni.
\newblock Abstract non-interference: parameterizing non-interference by
  abstract interpretation.
\newblock In {\em Proceedings of {POPL}}, pages 186--197. {ACM}, 2004.

\bibitem{goguen1982security}
Joseph~A Goguen and Jos{\'e} Meseguer.
\newblock Security policies and security models.
\newblock In {\em Proceedings of S\&P}, pages 11--20. IEEE Computer Society,
  1982.

\bibitem{goguen1984unwinding}
Joseph~A Goguen and Jos{\'e} Meseguer.
\newblock Unwinding and inference control.
\newblock In {\em IEEE Symposium on Security and Privacy}, pages 75--75. IEEE
  Computer Society, 1984.

\bibitem{Goldreich87mentalgame}
O.~Goldreich, S.~Micali, and A.~Wigderson.
\newblock How to play {ANY} mental game.
\newblock In {\em Proceedings of the 19th Annual ACM Symposium on Theory of
  Computing STOC '87}, pages 218--229. ACM, 1987.

\bibitem{gray1990probabilistic}
James~W Gray~III.
\newblock Probabilistic interference.
\newblock In {\em Proceedings of S\&P}, pages 170--179. IEEE, 1990.

\bibitem{Grossklags08secureOrInsure}
J.~Grossklags, N.~Christin, and J.~Chuang.
\newblock Secure or insure? {A} game-theoretic analysis of information security
  games.
\newblock In {\em Proceedings of WWW}, pages 209--218. {ACM}, 2008.

\bibitem{Hankin02flow}
C.~Hankin, R.~Nagarajan, and P.~Sampath.
\newblock Flow analysis: Games and nets.
\newblock In {\em The Essence of Computation, Complexity, Analysis,
  Transformation. Essays Dedicated to Neil D. Jones [on occasion of his 60th
  birthday]}, volume 2566 of {\em Lecture Notes in Computer Science}, pages
  135--156. Springer, 2002.

\bibitem{Harris13capabilities}
W.R. Harris, S.~Jha, T.W. Reps, J.~Anderson, and R.N.M. Watson.
\newblock Declarative, temporal, and practical programming with capabilities.
\newblock In {\em Proceedings of {SP}}, pages 18--32. {IEEE} Computer Society,
  2013.

\bibitem{Howard66infovalue}
R.~A. Howard.
\newblock Information value theory.
\newblock {\em IEEE Transactions on Systems Science and Cybernetics}, pages
  22--26, 1966.

\bibitem{Jamroga15strat-ni}
W.~Jamroga and M.~Tabatabaei.
\newblock Strategic noninterference.
\newblock In {\em Proceedings of the 30th International Conference on ICT
  Systems Security and Privacy Protection IFIP SEC 2015}, volume 455 of {\em
  IFIP Advances in Information and Communication Technology}, pages 67--81.
  Springer, 2015.

\bibitem{Korzhyk11stackelberg}
D.~Korzhyk, Z.~Yin, C.~Kiekintveld, V.~Conitzer, and M.~Tambe.
\newblock Stackelberg vs. {N}ash in security games: An extended investigation
  of interchangeability, equivalence, and uniqueness.
\newblock {\em Journal of Artificial Intelligence Research}, 41:297--327, 2011.

\bibitem{Kupferman00automata-mcheck}
O.~Kupferman, M.Y. Vardi, and P.~Wolper.
\newblock An automata-theoretic approach to branching-time model checking.
\newblock {\em Journal of the {ACM}}, 47(2):312--360, 2000.

\bibitem{Levin08honeymoon}
J.~Levin.
\newblock In what city did you honeymoon? and other monstrously stupid bank
  security questions.
\newblock {\em Slate}, 2008.

\bibitem{LeytonBrown08GT}
K.~Leyton-Brown and Y.~Shoham.
\newblock {\em Essentials of Game Theory: A Concise, Multidisciplinary
  Introduction}.
\newblock Morgan \& Claypool, 2008.

\bibitem{li2005downgrading}
Peng Li and Steve Zdancewic.
\newblock Downgrading policies and relaxed noninterference.
\newblock In {\em ACM SIGPLAN Notices}, volume~40, pages 158--170. ACM, 2005.

\bibitem{Malacaria99nondet-games}
P.~Malacaria and C.~Hankin.
\newblock Non-deterministic games and program analysis: An application to
  security.
\newblock In {\em Proceedings of LICS}, pages 443--452. {IEEE} Computer
  Society, 1999.

\bibitem{mccullough1988noninterference}
Daryl McCullough.
\newblock Noninterference and the composability of security properties.
\newblock In {\em Proceedings of S\&P}, pages 177--186. IEEE, 1988.

\bibitem{mciver2003probabilistic}
Annabelle McIver and Carroll Morgan.
\newblock A probabilistic approach to information hiding.
\newblock {\em Programming Methodology}, pages 441--460, 2003.

\bibitem{McNaughton66omega-regular}
R.~McNaughton.
\newblock Testing and generating infinite sequences by a finite automaton.
\newblock {\em Information and Control}, 9:521–--530, 1966.

\bibitem{Moore11economics}
T.~Moore and R.~Anderson.
\newblock Economics and internet security: a survey of recent analytical,
  empirical and behavioral research.
\newblock Technical Report TR-03-11, Computer Science Group, Harvard
  University, 2011.

\bibitem{o1990calculus}
Colin O'Halloran.
\newblock A calculus of information flow.
\newblock In {\em Proceedings of ESORICS}, pages 147--159, 1990.

\bibitem{pierro2004approximate}
Alessandra~Di Pierro, Chris Hankin, and Herbert Wiklicky.
\newblock Approximate non-interference.
\newblock {\em Journal of Computer Security}, 12(1):37--81, 2004.

\bibitem{Robinson65resolution}
J.~A. Robinson.
\newblock A machine-oriented logic based on the resolution principle.
\newblock {\em Journal of the ACM}, 12(1):23--41, 1965.

\bibitem{Roscoe97csp}
A.~W. Roscoe, C.~A.~R. Hoare, and R.~Bird.
\newblock {\em The Theory and Practice of Concurrency}.
\newblock Prentice Hall PTR, 1997.

\bibitem{roscoe1995csp}
A.W. Roscoe.
\newblock {CSP} and determinism in security modelling.
\newblock In {\em Proceedings of S\&P}, pages 114--127. IEEE, 1995.

\bibitem{roscoe1999intransitive}
A.W. Roscoe and M.H. Goldsmith.
\newblock What is intransitive noninterference?
\newblock In {\em Proceedings of CSF}, pages 228--228. IEEE, 1999.

\bibitem{roscoe1994non}
A.W. Roscoe, J.C.P. Woodcock, and L.~Wulf.
\newblock Non-interference through determinism.
\newblock In {\em Proceedings of ESORICS}, pages 31--53. Springer, 1994.

\bibitem{rushby1992noninterference}
John Rushby.
\newblock {\em Noninterference, transitivity, and channel-control security
  policies}.
\newblock SRI International, Computer Science Laboratory, 1992.

\bibitem{ryan2001process}
Peter~YA Ryan and Steve~A Schneider.
\newblock Process algebra and non-interference.
\newblock {\em Journal of Computer Security}, 9(1):75--103, 2001.

\bibitem{SabelfeldS05declassification}
A.~Sabelfeld and D.~Sands.
\newblock Dimensions and principles of declassification.
\newblock In {\em Proceedings of CSFW-18}, pages 255--269. {IEEE} Computer
  Society, 2005.

\bibitem{seehusen2009information}
Fredrik Seehusen and Ketil St{\o}len.
\newblock Information flow security, abstraction and composition.
\newblock {\em IET Information Security}, 3(1):9--33, 2009.

\bibitem{smith2009foundations}
Geoffrey Smith.
\newblock On the foundations of quantitative information flow.
\newblock In {\em Foundations of Software Science and Computational
  Structures}, pages 288--302. Springer, 2009.

\bibitem{sutherland1986model}
David Sutherland.
\newblock A model of information.
\newblock In {\em Proc. 9th National Computer Security Conference}, pages
  175--183, 1986.

\bibitem{Meyden10comparison}
R.~van~der Meyden and C.~Zhang.
\newblock A comparison of semantic models for noninterference.
\newblock {\em Theoretical Computer Science}, 411(47):4123--4147, 2010.

\bibitem{van2007indeed}
Ron van~der Meyden.
\newblock What, indeed, is intransitive noninterference?
\newblock In {\em Proceedings of ESORICS}, pages 235--250. Springer, 2007.

\bibitem{vonNeumann44}
J.~von Neumann and O.~Morgenstern.
\newblock {\em Theory of Games and Economic Behaviour}.
\newblock Princeton University Press: Princeton, NJ, 1944.

\bibitem{Winskel95concurrency}
G.~Winskel and M.~Nielsen.
\newblock Handbook of logic in computer science (vol. 4).
\newblock chapter Models for Concurrency, pages 1--148. Oxford University
  Press, 1995.

\bibitem{Wittbold90information}
J.T. Wittbold and D.M. Johnson.
\newblock Information flow in nondeterministic systems.
\newblock In {\em IEEE Symposium on Security and Privacy}, pages 144--144,
  1990.

\bibitem{Yin10stackelberg}
Z.~Yin, D.~Korzhyk, C.~Kiekintveld, V.~Conitzer, and M.~Tambe.
\newblock Stackelberg vs. {N}ash in security games: interchangeability,
  equivalence, and uniqueness.
\newblock In {\em Proceedings of AAMAS}, pages 1139--1146. {IFAAMAS}, 2010.

\bibitem{zakinthinos1995composability}
Aris Zakinthinos and E~Stewart Lee.
\newblock The composability of non-interference.
\newblock {\em Journal of Computer Security}, 3(4):269--281, 1995.

\bibitem{Zdancewic03observational-det}
S.~Zdancewic and A.C. Myers.
\newblock Observational determinism for concurrent program security.
\newblock In {\em Proceedings of {CSFW-16}}, pages 29--43. {IEEE} Computer
  Society, 2003.

\end{thebibliography}

%\appendices
\extended{ %begin extended
  % **********
  \section*{Appendix}
  This appendix contains the proofs of some of the propositions and lemmas in the paper.

  \begin{proof}[\textbf{Proof of Proposition \ref{prop:unwind-rstar}}]
  As the relation $\rstar$ is constructed by adding related states in several steps, we do the proof by induction on these steps. We show that firstly $R_0\subseteq \uwNIL$, and secondly all related pair of states added in each step also is in $\uwNIL$.

  \noindent\textbf{Induction base:} If $(s_1,s_2)\in R_0$, then for some sequence of personalized actions of High players $\alpha$, either $s_1,\xrightarrow{\alpha}s_2$, or $s_2\xrightarrow{\alpha}s_1$, hence by property (LR) , and transitivity of the unwinding relation it holds that $(s_1,s_2)\in\uwNIL$. Therefore $R_0\subseteq\uwNIL$.

  \noindent\textbf{Induction step:} We show that if $R_i\subseteq\uwNIL$, then $F(R_i)\subseteq\uwNIL$ holds. $F(R_i)$ is constructed by union of three sets. We show that all these three sets are subsets of $\uwNIL$:

  \noindent \textbf{i-} $R_i\in\uwNIL$ by the induction step assumption.

  \vspace{0.5pt}
  \noindent \textbf{ii-} If $(s_1,s_2)\in R_i$ then by induction step assumption $(s_1,s_2)\in\uwNIL$ . So for any $t_1,t_2\in\States$, $a\in\Act$ and $l\in \Low$  such that $do'(s_1,l,a)=t_1$ and $do'(s_2,l,a)=t_2$, by property (SC) of unwinding relation it holds that $(t_1,t_2)\in\uwNIL$. Therefore:
  \begin{align*}
  &\{(t_1,t_2) \mid \exists (s_1,s_2)\in R, l\in \Low, a\in\Act \cdot\\
  &\qquad do'(s_1, l,a) = t_1, do'(s_2,l,a) = t_2\}\\
  &\subseteq\uwNIL.
  \end{align*}
  \vspace{0.5pt}
  \noindent \textbf{iii-} If $(t_1,s)\in R_i$ and $(s,t_2)\in R_i$, then by induction step assumption it holds that $(t_1,s)\in\uwNIL$ and $(t_2,s)\in\uwNIL$ . Therefore by transitivity of $\uwNIL$ it entails that $(t_1,t_2)\in\uwNIL$, hence:
  \begin{align*}
  &\set{(t_1,t_2) \mid \exists s\in \States\cdot(t_1,s)\in R \text{ and } (s,t_2)\in R}\\
  &\subseteq\uwNIL.
  \end{align*}

  \vspace{5pt}
  By i, ii, and iii we infer that $F(R_i)\subseteq \uwNIL$, and therefore by induction base and induction step we have that $\rstar\subseteq \uwNIL$.
  \end{proof}

  \vspace{5pt}

  \begin{proof}[\textbf{Proof of Proposition \ref{prop:rstar-is-unwinding-for-NI}}]
  If $NI_{M}(\High,\Low)$ then by Proposition \ref{prop:unwind-nonint} there is an unwinding relation $\uwNIL$ on the states of $M$. By Proposition \ref{prop:unwind-rstar} $\rstar\subseteq\uwNIL$ and therefore for any $l\in \Low$, $s_1,s_2\in \States$ such that $(s_1,s_2)\in\rstar$ it is the case that $(s_1,s_2)\in\uwNIL$ and therefore $obs(s_1,l)=obs(s_2,l)$. Hence by Lemma \ref{prop:rstarNI} $\rstar$ is an unwinding relation on the states of $M$.
  \end{proof}

  \vspace{5pt}

  \begin{proof}[\textbf{Proof of Lemma \ref{lemma:rstar-implies-ustar}}]
  Assume $(s_1,s_2)\in \rstar$ and $l\in\Low$. For proving that $(obs(s_1,l), obs(s_2,l))\in \ustar$ we verify the conditions in Definition \ref{def:U-star}. By taking $obs(s_1,l)=obs_1$ and $obs(s_2,l)=obs_2$, conditions (a) and (b) are satisfied trivially. If we take $t_1:=s_2$ and $t_2:=s_2$, then $(s_1,t_1)\in\rstar$ by the proposition assumption and $(s_2,t_2)\in\rstar$ by reflexivity of $\rstar$. These prove conditions (c) and (d). Condition (e) is also satisfied because $t_1=t_2$. Therefore $(obs(s_1,l), obs(s_2,l))\in \ustar$.
  \end{proof}

  \vspace{5pt}

  \begin{proof}[\textbf{Proof of Proposition \ref{prop:rstar-is-unwinding-for-ustar-unified-model}}]
  For the proof, we are going to use Lemma~\ref{prop:rstarNI} and show that for any two states $s_1,s_2$ and $l\in\Low$, if $(s_1,s_2)\in R_{\ustar(M)}^*$ then $obs^*(s_1,l)=obs^*(s_2,l)$. First notice that $\rstar = R_{\ustar(M)}^*$, because $M$ and $\ustar(M)$ differ only in their observation functions and the definition of $R^*$ relation does not depend on the observation function of the model. So for any $(s_1,s_2)\in R_{\ustar(M)}^*$ and $l\in\Low$ we have that $(s_1,s_2)\in\rstar$, and by Lemma \ref{lemma:rstar-implies-ustar} it follows that $(obs(s_1,l), obs(s_2,l))\in \ustar$, and therefore $obs^*(s_1,l)=obs^*(s_2,l)$. Hence by  Lemma \ref{prop:rstarNI} it holds that $R_{\ustar(M)}^*$ is an unwinding relation for $\ustar(M)$'and therefore $NI_{\ustar(M)}(\High,\Low)$.
  \end{proof}

  \vspace{5pt}

  \begin{proof}[\textbf{Proof of Proposition \ref{prop:ustar-is-minimal-NI-unification}}]
  Assume $U$ is a unification of observations for model $M$ such that $NI_{U(M)}(\High,\Low)$ and assume $(obs_1,obs_2)\in\ustar$. We show that $(obs_1,obs_2)\in U$ and hence $\ustar\subseteq U$. By the definition of $\ustar$, there exists $s_1,s_2,t_1, t_2\in\States$, $l\in\Low$ such that $obs(s_1,l)=obs_1$, $obs(s_2,l)=obs_2$, $(s_1,t_1)\in\rstar$, $(s_2,t_2)\in\rstar$ and $obs(t_1,l)=obs(t_2,l)$. By $NI_{U(M)}(\High,\Low)$ and Proposition \ref{prop:rstar-is-unwinding-for-NI} we have that $R_{U(M)}^*$ is an unwinding relation for $U(M)$. So as $\rstar = R_{U(M)}^*$, $\rstar$ is also an unwinding relation for $U(M)$. Therefore by property (OC) of unwinding relation, from $(s_1,t_1)\in\rstar$ and $(s_2,t_2)\in\rstar$ we entail that $obs'(s_1,l)=obs'(t_1,l)$ and $obs'(s_2,l)= obs'(t_2,l)$. Using the definition of $obs'(.)$ we have that $(obs(s_1,l),obs(t_1,l))\in U$ and $(obs(s_2,l),obs(t_2,l))\in U$. So, as $obs(t_1,l)=obs(t_2,l)$ and by transitivity property of $U$, we infer that $(obs(s_1),obs(s_2))\in U$, and it follows that $(obs_1,obs_2)\in U$. Therefore $\ustar\subseteq U$.
  \end{proof}
  % **********

  \vspace{5pt}

  \begin{proof}[\textbf{Proof of Proposition \ref{prop:unwind-nonint-nontotal}}]
  \para{\underline{``$\pmb{\Leftarrow}$''}}\ Suppose that there exists an unwinding relation $\uwNIL$ satisfying (OC), (SC) and (LR). We show for all $\alpha\in(\Agt\times\Act)^*$ and $\low\in \Low$, if $exec(\alpha)\neq \undef$ then $[exec(\alpha)]_{\low} = [exec(\Purge_\High(\alpha))]_{\low}$. We prove by induction on the size of $\alpha$.

  \noindent\textbf{Induction base:} $\alpha = \langle \rangle$. In this case $\Purge_\High(\alpha) = \langle \rangle$, and therefore $exec(\alpha)=exec(\Purge_\High(\alpha))=s_0$. By the reflexivity of $\uwNIL$ we have that $exec(\alpha) \uwNIL exec(\Purge_\High(\alpha))$.

  \noindent\textbf{Induction step:} Suppose for some $\alpha\in(\Agt\times\Act)^*$, $exec(\alpha)\neq \undef$ implies $exec(\alpha)\uwNIL exec(\Purge_\High(\alpha))$. We show that for all $a\in \Act$ and $u\in\Agt$ it holds that $exec(\alpha\circ(u,a))\neq \undef$ implies $exec(\alpha\circ(u,a))\uwNIL exec(\Purge_\High(\alpha\circ(u,a)))$ (where $\circ$ denotes the concatenation operator). We consider three cases:

  \noindent i) If $exec(\alpha\circ(u,a))=\undef$ then it holds that $exec(\alpha\circ(u,a))\neq \undef$ implies $exec(\alpha\circ(u,a))\uwNIL exec(\Purge_\High(\alpha\circ(u,a)))$.

  \noindent ii) If $exec(\alpha\circ(u,a))\neq\undef$ and $u\in\Low$ then firstly notice that $exec(\Purge(\alpha\circ(u,a))\neq\undef$. Because by induction step assumption and (OC) it holds that $obs(exec(\alpha),u) = obs(exec(\Purge_\High(\alpha)),u)$ and so because $a\in act(exec(\alpha),u)$, by our model restrictions it holds that $a\in act(exec(\Purge_\High(\alpha),u))$. Therefore by $exec(\alpha)\uwNIL exec(\Purge_\High(\alpha))$ (induction step assumption), $u\in\Low$, and (SC) we have $exec(\alpha\circ(u,a))\uwNIL exec(\Purge_\High(\alpha\circ(u,a)))$.

  \noindent iii) If $exec(\alpha\circ(u,a))\neq\undef$ and $u\in\High$ then by (LR) property of $\uwNIL$, $exec(\alpha)\uwNIL exec(\alpha\circ(u,a))$. By this, induction step assumption and $\Purge_\High(\alpha) = \Purge_\High(\alpha\circ(u,a))$ we infer that $exec(\alpha\circ(u,a))\uwNIL exec(\Purge_\High(\alpha\circ(u,a)))$.

  \para{\underline{``$\pmb{\Rightarrow}$''}}\ Suppose that $NI_M(\High,\Low)$, we show there exists an unwinding relation on the states of $M$. Consider the relation $\sim$ defined as follows: for any $s,t\in\States$, $s\sim t$ if for all $\alpha\in(\Agt\times\Act)^*$ and $u_L\in\Low$, it holds that if $exec(s,\alpha)\neq\undef$ and $exec(t,\alpha)\neq\undef$, then ${[exec(s,\alpha)]}_{u_L} = {[exec(t,\alpha))]}_{u_L}$. It can easily be seen that $\sim$ is an equivalence relation, we prove that it satisfies (OC), (SC) and (LR) properties.

  \noindent\textbf{(OC)}: If $s\sim t$ and we take $\alpha = \langle \rangle$, by ${[exec(s,\alpha)]}_{u_L} = {[exec(s,\alpha)]}_{u_L}$ it holds that $[s]_{u_L} = {[t]}_{u_L}$and therefore $\sim$ satisfies (OC).

  \noindent\textbf{(SC)}: Suppose that for some $s,t\in\States$, $u\in\Low$ and $a\in\Act$ such that $s\sim t$, it holds that $do(s,u,a)\neq\undef$ and $do(s,u,a)\not\sim do(t,u,a)$. Then there exists $\alpha\in(\Agt\times\Act)^*$, $u_L\in\Low$ such that $exec(do(s,u,a),\alpha)\neq\undef$, $exec(do(t,u,a),\alpha)\neq\undef$, and ${[exec(do(s,u,a),\alpha)]}_{u_L} \neq {[exec(do(t,u,a),\alpha))]}_{u_L}$. Therefore ${[exec(s,((u,a)\circ\alpha)]}_{u_L} \neq {[exec(t,((u,a)\circ\alpha)]}_{u_L}$, which contradicts $s\sim t$.

  \noindent\textbf{(LR)}: Suppose that for some $s\in\States$, $u\in\High$ and $a\in\Act$, it holds that $do(s,u,a)\neq\undef$ and $s\not\sim do(s,u,a)$. Then there exists $\alpha\in(\Agt\times\Act)^*$, $u_L\in\Low$ such that $exec(do(s,u,a),\alpha)\neq\undef$, $exec(s,\alpha)\neq\undef$, and ${[exec(s,\alpha)]}_{u_L} \neq {[exec(do(s,u,a),\alpha))]}_{u_L}$. Because $s$ is reachable, we have that $s=exec(\beta)$ for some $\beta\in(\Agt\times\Act)^*$. Therefore ${[exec(\beta\circ\alpha)]}_{u_L} \neq {[exec(\beta\circ(u,a)\circ\alpha))]}_{u_L}$. But this is a contradiction because by $NI_M(\High,\Low)$ it holds that ${[exec(\beta\circ(u,a)\circ\alpha))]}_{u_L} = {[exec(\Purge_\High(\beta\circ(u,a)\circ\alpha)))]}_{u_L}$ and ${[exec(\beta\circ\alpha))]}_{u_L} = {[exec(\Purge_\High(\beta\circ\alpha)))]}_{u_L}$ and we have that $\Purge_\High(\beta\circ(u,a)\circ\alpha)=\Purge_\High(\beta\circ\alpha)$.
  \end{proof}
}%end extended
\end{document}